\newtheorem{thm}{Theorem}
\newtheorem*{thm*}{Theorem}
\newtheorem*{prop*}{Proposition}
\newtheorem{lemma}[thm]{Lemma}
\newtheorem*{lemma*}{Lemma}
\newtheorem*{cor*}{Corollary}
\newtheorem*{cj*}{Conjecture}
\newtheorem{Def}[thm]{Definition}
\newtheorem*{Def*}{Definition}
\newtheorem*{question*}{Question}
\newtheorem*{problem*}{Problem}
\theoremstyle{definition}
\newtheorem{rem}[thm]{Remark}
\def\thmhead@plain#1#2#3{%
  \thmname{#1}\thmnumber{\@ifnotempty{#1}{ }\@upn{#2}}%
  \thmnote{ {\the\thm@notefont#3}}}
\let\thmhead\thmhead@plain
\newcommand{\bb}{\begin{equation}\begin{aligned}\hspace{0pt}}
\newcommand{\bbb}{\begin{equation*}\begin{aligned}}
\newcommand{\ee}{\end{aligned}\end{equation}}
\newcommand{\eee}{\end{aligned}\end{equation*}}
\newcommand\ceil[1]{\left\lceil#1\right\rceil}
\newcommand{\texteq}[1]{\stackrel{\mathclap{\scriptsize \mbox{#1}}}{=}}
\renewcommand{\textgeq}[1]{\stackrel{\mathclap{\scriptsize \mbox{#1}}}{\geq}}
\newcommand{\ketbra}[1]{\ket{#1}\!\!\bra{#1}}
\newcommand{\ketbraa}[2]{\ket{#1}\!\!\bra{#2}}
\newcommand{\sumno}{\sum\nolimits}
\newcommand{\e}{\varepsilon}
\newcommand{\id}{\mathds{1}}
\newcommand{\R}{\mathds{R}}
\newcommand{\N}{\mathds{N}}
\newcommand{\C}{\mathds{C}}
\DeclareMathOperator{\Tr}{Tr}
\DeclareMathAlphabet{\pazocal}{OMS}{zplm}{m}{n}
\DeclareMathOperator{\supp}{supp}
\newcommand{\HH}{\pazocal{H}}
\newcommand{\T}{\pazocal{T}}
\newcommand{\CC}{\pazocal{C}}
\newcommand{\B}{\pazocal{B}}
\newcommand{\NN}{\mathcal{N}}
\newcommand{\TT}{\mathcal{T}}
\newcommand{\LL}{\mathcal{L}}
\newcommand{\Icoh}{I_{\mathrm{coh}}}
\newcommand{\SEP}{\pazocal{S}}
\newcommand{\lsmatrix}{\left(\begin{smallmatrix}}
\newcommand{\rsmatrix}{\end{smallmatrix}\right)}
\newcommand\xxrightarrow[2][]{\mathrel{%
  \setbox2=\hbox{\stackon{\scriptstyle#1}{\scriptstyle#2}}%
  \stackunder[2pt]{%
    \xrightarrow{\makebox[\dimexpr\wd2\relax]{$\scriptstyle#2$}}%
  }{%
   \scriptstyle#1\,%
  }%
}}
\newcommand{\tends}[2]{\xxrightarrow[\! #2 \!]{\mathrm{#1}}}
\definecolor{Blues5seq1}{RGB}{239,243,255}
\definecolor{Blues5seq2}{RGB}{189,215,231}
\definecolor{Blues5seq3}{RGB}{107,174,214}
\definecolor{Blues5seq4}{RGB}{49,130,189}
\definecolor{Blues5seq5}{RGB}{8,81,156}
\definecolor{Greens5seq1}{RGB}{237,248,233}
\definecolor{Greens5seq2}{RGB}{186,228,179}
\definecolor{Greens5seq3}{RGB}{116,196,118}
\definecolor{Greens5seq4}{RGB}{49,163,84}
\definecolor{Greens5seq5}{RGB}{0,109,44}
\definecolor{Reds5seq1}{RGB}{254,229,217}
\definecolor{Reds5seq2}{RGB}{252,174,145}
\definecolor{Reds5seq3}{RGB}{251,106,74}
\definecolor{Reds5seq4}{RGB}{222,45,38}
\definecolor{Reds5seq5}{RGB}{165,15,21}
\newcommand{\n}{a^\dag\! a}
\newcommand{\vb}[1]{\boldsymbol{\mathbf{#1}}}
\newcommand{\fakepart}[1]{
 \par\refstepcounter{part}
  \sectionmark{#1}
}
\renewcommand{\CC}{\mathscr{C}}
\newcommand{\deff}[1]{\textbf{\emph{#1}}}
\newcommand{\QQ}{Q_{\leftrightarrow}}
\renewcommand{\epsilon}{\varepsilon}
\newcommand{\nocontentsline}[3]{}
\newcommand{\tocless}[2]{\bgroup\let\addcontentsline=\nocontentsline#1{#2}\egroup}
\begin{document}

\title{Exact solution for the quantum and private capacities of bosonic dephasing channels}

\author{Ludovico Lami}
\email{ludovico.lami@gmail.com}
\affiliation{Institut f\"{u}r Theoretische Physik und IQST, Universit\"{a}t Ulm, Albert-Einstein-Allee 11, D-89069 Ulm, Germany}
\affiliation{QuSoft, Science Park 123, 1098 XG Amsterdam, the Netherlands}
\affiliation{Korteweg--de Vries Institute for Mathematics, University of Amsterdam, Science Park 105-107, 1098 XG Amsterdam, the Netherlands}
\affiliation{Institute for Theoretical Physics, University of Amsterdam, Science Park 904, 1098 XH Amsterdam, the Netherlands}

\author{Mark M. Wilde}
\email{wilde@cornell.edu}
\affiliation{Hearne Institute for Theoretical Physics, Department of Physics and Astronomy, and Center for Computation and Technology, Louisiana State University, Baton Rouge, Louisiana 70803, USA}
\affiliation{School of Electrical and Computer Engineering, Cornell University, Ithaca, New York 14850, USA}

\begin{abstract}
The capacities of noisy quantum channels capture the ultimate rates of information transmission across quantum communication lines, and the quantum capacity plays a key role in determining the overhead of fault-tolerant quantum computation platforms. In the case of bosonic systems, central to many applications, no closed formulas for these capacities were known for bosonic dephasing channels, a key class of non-Gaussian channels modelling, e.g., noise affecting superconducting circuits or fiber-optic communication channels. Here we provide the first exact calculation of the quantum, private, two-way assisted quantum, and secret-key agreement capacities of all bosonic dephasing channels. We prove that that they are equal to the relative entropy of the distribution underlying the channel to the uniform distribution. Our result solves a problem that has been open for over a decade, having been posed originally by~\href{https://doi.org/10.1117/12.870179}{[Jiang \& Chen, \emph{Quantum and Nonlinear Optics} 244, 2010]}.
\end{abstract}

%\preprint{ }
%\title[ ]{Exact formula for the quantum and private capacities of bosonic dephasing channels}

\date{\today}
%\startpage{1}
%\endpage{10}
\maketitle
\let\oldaddcontentsline\addcontentsline
\renewcommand{\addcontentsline}[3]{}

One of the great promises of quantum information science is that remarkable tasks can be achieved by encoding information into quantum systems~\cite{book2000mikeandike}. In principle, algorithms executed on quantum computers can factor large integers~\cite{Shor}, simulate complex physical dynamics~\cite{childs2018toward}, solve unstructured search problems with proven speedups~\cite{grover96}, and perform linear-algebraic manipulations on large matrices encoded into quantum systems~\cite{Harrow2009,gilyen2018QSingValTransf}. Additionally, ordinary (\textquotedblleft classical\textquotedblright) information can be transmitted securely over quantum channels by means of quantum key distribution~\cite{XMZLP20}.

However, all of these possibilities are hindered in practice because all quantum systems are subject to decoherence~\cite{S05}. A very simple decoherence process takes a density operator $\rho = \sum_{n,m} \rho_{nm} \ketbraa{n}{m}$ to $\rho' = \sum_{n,m} \rho_{nm} e^{-\frac{\gamma}{2}(n-m)^2} \ketbraa{n}{m}$, where $\gamma\geq 0$ measures the extent to which the off-diagonal elements are reduced in magnitude. This process is also called dephasing, because it reduces or eliminates relative phases. Decoherence is a ubiquitous phenomenon affecting all quantum physical systems. In fact, in various platforms for quantum computation, experimentalists employ the T2 time as a phenomenological quantity that roughly measures the time that it takes for a coherent superposition to decohere to a probabilistic mixture. Dephasing noise in some cases is considered to be the dominant source of errors affecting quantum information encoded into superconducting systems~\cite{BDKS08}, as well as other platforms~\cite{Taylor2005,OLABLW08}. %(see also~\cite{AP08,Aliferis_2009,BP13}).
If those systems are employed to carry out quantum computation, then the errors must be amended by means of error-correcting codes, which typically causes expensive overheads in the amount of physical qubits needed. Not only does dephasing affect quantum computers, but it also affects quantum communication systems. Indeed, temperature fluctuations~\cite{W92} or Kerr non-linearities~\cite{Gordon:90} in a fiber, imprecision in the path length of a fiber~\cite{D98}, or the lack of a common phase reference between sender and receiver~\cite{BDS07} lead to decoherence as well, and this can affect quantum communication and key distribution schemes adversely.

Many of the aforementioned forms of decoherence can be unified under a single model, known as the bosonic dephasing channel (BDC)~\cite{JC10,Arqand2020}. The action of such a channel on the density operator $\rho$ of a single-mode bosonic system is given by
\bb
\NN_{p}(\rho)\coloneqq\int_{-\pi}^{\pi}d\phi\ p(\phi)\ e^{-i\n\, \phi}\rho\, e^{i\n\, \phi},\label{eq:bdc-def}
\ee
where $p$ is a probability density function on the interval $\left[-\pi,\pi\right]$ and $a^\dag a$ is the photon number operator. Since each unitary operator $e^{-i\n\,\phi}$ realizes a phase shift of the state $\rho$, the action of the channel $\NN_{p}$ is to randomize the phase of this state according to the probability density $p$. 
%To understand this channel a bit more and to relate to the previous discussion, let us consider representing the density operator $\rho$ in the photon number basis as $\rho=\sum_{n,m}\rho_{nm}\ketbraa{n}{m}$. Then it is a straightforward calculation to show that
Representing $\rho=\sum_{n,m}\rho_{nm}\ketbraa{n}{m}$ in the photon number basis, it is a straightforward calculation to show that
\bb
\NN_{p}(\rho)=\sum_{n,m}\rho_{nm}(T_p)_{nm}\ketbraa{n}{m}\, ,
\label{eq:action-bdc-number-basis}
\ee
where $T_p$ is the infinite matrix with entries
\bb
(T_p)_{nm} \coloneqq\int_{-\pi}^{\pi}d\phi\ p(\phi)\, e^{-i\phi(n-m)}.
\label{eq:toeplitz-bos-deph}
\ee
This channel thus generalizes the simple dephasing channel considered above. Its action preserves diagonal elements of~$\rho$, but reduces the magnitude of the off-diagonal elements, a key signature of decoherence. 
%We note here that multimode versions of the channel in~\eqref{eq:bdc-def} have been defined in~\cite[Section~II-B]{BDS07} and studied further in~\cite{Fanizza2021squeezingenhanced,Z21}.
As the name suggests, the BDC can be seen as a generalization to bosonic systems of the qudit dephasing channel~\cite{Devetak-Shor}.

Of primary interest is understanding the information-processing capabilities of the BDC in~\eqref{eq:bdc-def}. We can do so by means of the formalism of quantum Shannon theory~\cite{H17, W17}, in which we assume that the channel acts many times to affect multiple quantum systems. Not only does this formalism model dephasing that acts on quantum information encoded in a memory, as in superconducting systems, but also dephasing that affects communication systems. Here, a key quantity of interest is the quantum capacity $Q(\NN_{p})$ of the BDC $\NN_{p}$, which is equal to the largest rate at which quantum information can be faithfully sustained in the presence of dephasing~\cite{W17}. The quantum capacity has been traditionally studied with applications to quantum communication in mind; however, recent evidence~\cite{FMS22} indicates that it is also relevant for understanding the overhead of fault-tolerant quantum computation, i.e., the fundamental ratio of physical to logical qubits to perform quantum computation indefinitely with little chance of error. The private capacity $P(\NN_{p})$ is another operational quantity of interest~\cite{W17}, being the largest rate at which private classical information can be faithfully transmitted over many independent uses of the channel $\NN_{p}$ (Figure~\ref{protocol_Q_fig}). One can also consider both of these capacities in the scenario in which classical processing or classical communication is allowed for free between every channel use~\cite{BDSW96, TGW14IEEE}, and here we denote the respective quantities by $Q_{\leftrightarrow}(\NN_{p})$ and $P_{\!\leftrightarrow}(\NN_{p})$ (Figure~\ref{protocol_Q2_fig}). The secret-key-agreement capacity $P_{\!\leftrightarrow}(\NN_{p})$ is directly related to the rate at which quantum key distribution is possible over the channel~\cite{TGW14IEEE}, and as such, it is a fundamental limit of experimental interest.
One can also study strong converse capacities (see e.g.~\cite[Eq.~(9.122)]{H17},~\cite[Definition~9.15]{KW20book}, and~\cite{WTB16}), which sharpen the above operational interpretations by considering error probabilities between zero and one. If the usual capacity is equal to the strong converse capacity, then we say that the strong converse property holds for the channel under consideration, and the implication is that the capacity demarcates a very sharp dividing line between achievable and unachievable rates for communication. We let $Q^{\dag}(\NN_{p})$, $P^{\dag}(\NN_{p})$, $Q_{\leftrightarrow}^{\dag}(\NN_{p})$, and $P_{\!\leftrightarrow}^{\dag}(\NN_{p})$ denote the various strong converse capacities for the communication scenarios mentioned above. Understanding all of the aforementioned capacities is essential for the forthcoming quantum internet~\cite{WEH18}, which will consist of various nodes in a network exchanging quantum and private information using the principles of quantum information science.

%\cite{Schumacher1996, PhysRevA.54.2629, BKN98, L97, capacity2002shor, ieee2005dev}

\begin{figure}
\includegraphics[scale=0.16]{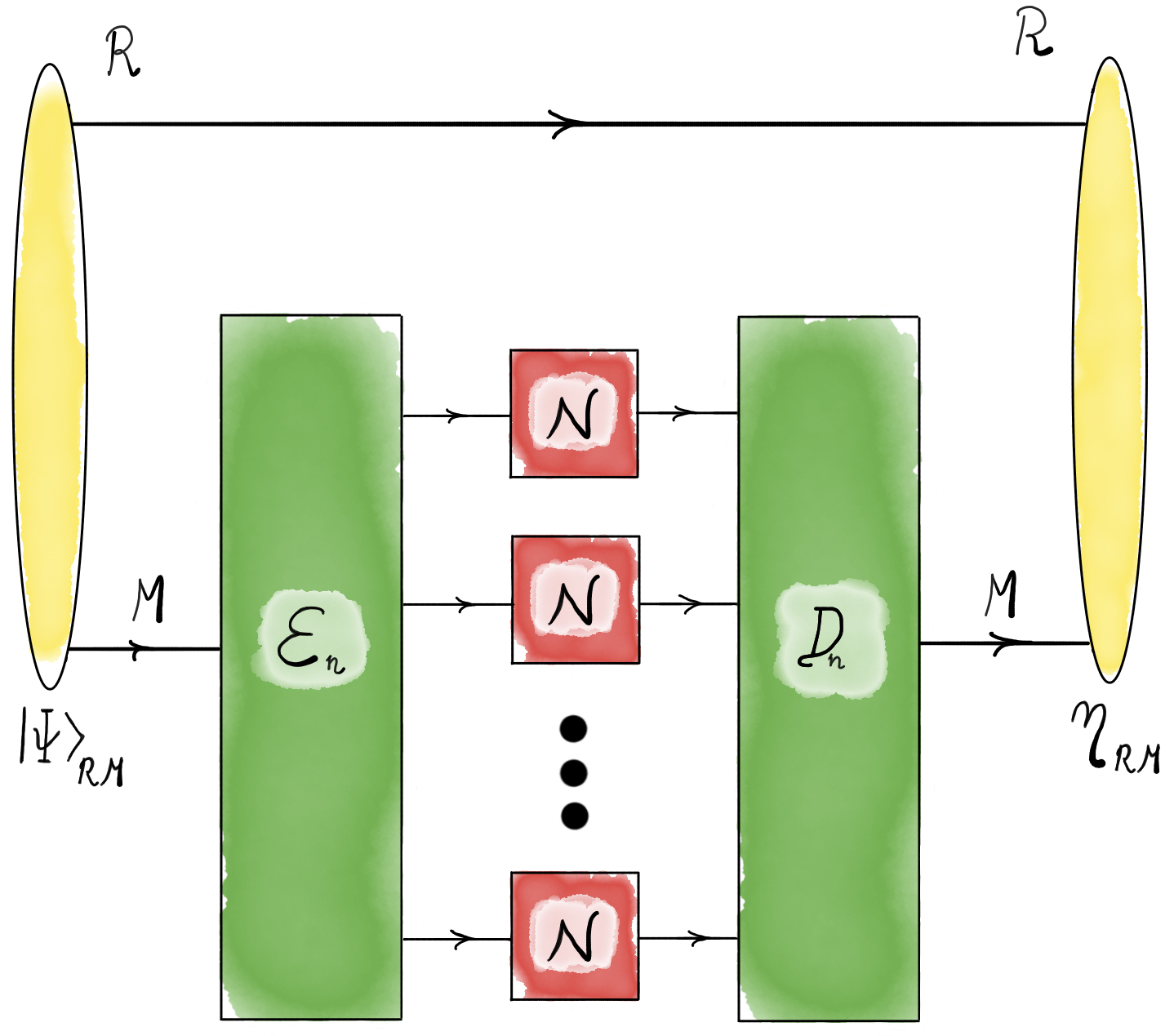}
\caption{A depiction of a quantum communication protocol that uses the channel $\NN$ a total of $n$ times to send a quantum system $M$ reliably. The initial state of the protocol is $\Psi_{RM}$, and the final state is $\eta_{RM} \coloneqq \left(\operatorname{id}_R \otimes \left(\mathcal{D}_n\circ \NN^{\otimes n} \circ \mathcal{E}_n\right)_M \right) (\Psi_{RM})$. The encoding and decoding channels ${\cal E}_n$ and ${\cal D}_n$ are operated by the sender Alice and receiver Bob, respectively.
The system $M$, initially entangled with a reference system $R$, is encoded via a suitable encoding map $\mathcal{E}_n$, transmitted via $n$ parallel uses of the channel $\NN$, and decoded at the receiving end by a decoding map $\mathcal{D}_n$. The error associated with the transmission is $\e \coloneqq \sup_{\ket{\Psi}} \left(1-\braket{\Psi_{RM}|\eta_{RM}|\Psi_{RM}}\right)$, and the number of transmitted qubits is $\log_2 |M|$, where $|M|$ is the dimension of $M$. Thus, the rate of transmitted qubits with $n$ uses of $\NN$ and error $\e$ is given by $\sup_{\mathcal{E}_n,\mathcal{D}_n} (\log_2|M|)/n \eqqcolon \frac1n Q_\e(\NN^{\otimes n})$, with the maximization being over all encoding and decoding operations achieving error at most $\e$. The quantum capacity is then obtained by taking the limit $n\to\infty$ and requiring that $\e$ vanishes in this limit, i.e.\ $Q(\NN)\coloneqq \inf_{\e\in (0,1)}\liminf_{n} \frac1n Q_\e(\NN^{\otimes n})$. The strong converse quantum capacity, instead, is constructed by allowing a nonzero error $\e$ also asymptotically, with the only requirement that it stays bounded away from its maximum value of $1$: in formula, $Q^\dag(\NN)\coloneqq \sup_{\e\in (0,1)}\limsup_{n} \frac1n Q_\e(\NN^{\otimes n})$. The private capacity $P(\NN)$ and the associated strong converse capacity $P^\dag(\NN)$ are defined analogously, with the main differences being that (a)~the transmitted message $M$ is classical, (b)~an eavesdropper Eve is granted access to all environment systems interacting with the input signals of $\NN$, and (c)~the main goal of the protocol is to transmit the message reliably in such a way that Eve does not learn about it. See~\cite{KW20book} for further expositions.}
\label{protocol_Q_fig}
\end{figure}

\begin{figure}
\includegraphics[scale=0.16]{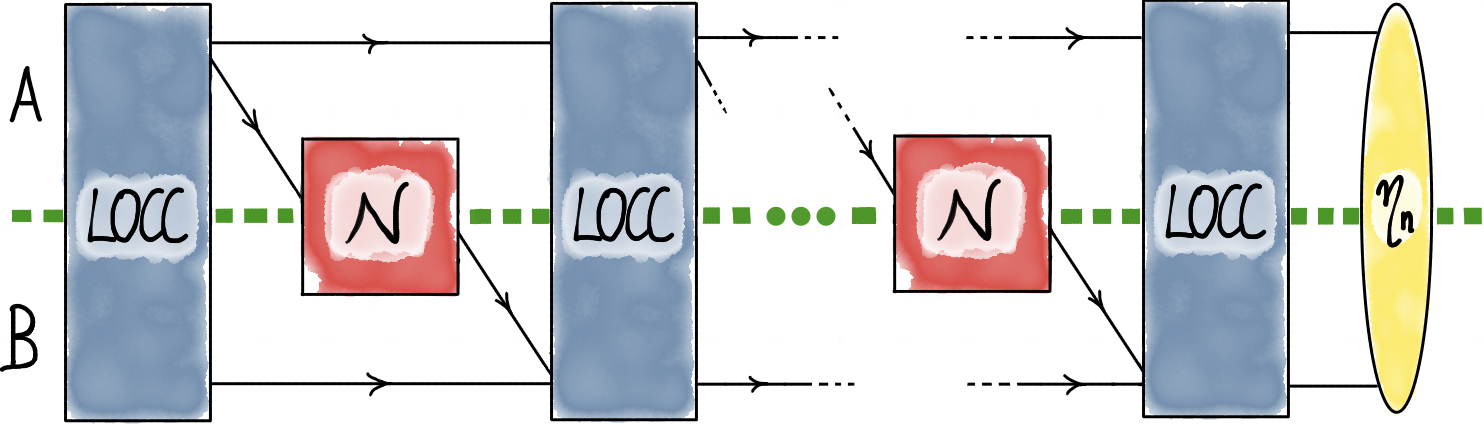}
\caption{An LOCC-assisted protocol that involves $n$ uses of the quantum channel $\NN$, assumed to connect two spatially separated laboratories, belonging to Alice and Bob. The upper arrows correspond to quantum registers of Alice and the lower arrows to quantum registers of Bob. Between each channel use and the next, Alice and Bob can implement an arbitrary protocol composed of local operations assisted by classical communication (LOCC). The final output of the protocol is a state $\eta_n$ that should resemble a maximally entangled state $\Phi_K$ of local dimension $K$. The associated error is $\e\coloneqq 1-\braket{\Phi_K|\eta_n|\Phi_K}$, and the rate of entanglement generation with $n$ uses is given by $\sup (\log_2 K)/n \eqqcolon Q_{\leftrightarrow,n,\epsilon}(\mathcal{N})$, with the maximization being over all sequences of LOCC protocols. The assisted quantum capacity of $\NN$ is then defined by taking the limit $n\to\infty$ as $Q_\leftrightarrow(\NN)\coloneqq \inf_{\e\in (0,1)} \liminf_{n} Q_{\leftrightarrow,n,\epsilon}(\mathcal{N})$, with the associated strong converse capacity being $Q^\dag_{\leftrightarrow}(\NN) \coloneqq \sup_{\epsilon \in (0,1)} \limsup_{n} Q_{\leftrightarrow,n,\epsilon}(\NN)$. The assisted private capacity $P_{\!\leftrightarrow}(\NN)$ and its strong converse capacity $P_{\!\leftrightarrow}^\dag(\NN)$ are constructed similarly, with the difference that the target state is a private state instead of a maximally entangled state.}
\label{protocol_Q2_fig}
\end{figure}

We note here that while the quantum capacity~\cite{JC10,Arqand2020} and the assisted quantum capacity~\cite{Arqand2021} of the BDC $\NN_p$ in~\eqref{eq:bdc-def} have been investigated, neither of them has been calculated so far. The determination of the quantum capacity of this channel, in particular, has been an open problem since the publication of~\cite{JC10} in~2010. The main difficulty is that $\NN_p$ is in general a non-Gaussian channel, which makes the techniques in~\cite{holwer,WPG07} inapplicable.

%One can also study strong converse capacities, which sharpen the interpretation of capacity. In short, the strong converse capacity is the smallest communication rate such that the error probability in decoding necessarily tends to one as the number of channel uses becomes large (see~\cite{KW20book} for a detailed account of strong converse capacities). If the usual capacity is equal to the strong converse capacity, then we say that the strong converse property holds for the channel under consideration, and the implication is that the capacity demarcates a very sharp dividing line between achievable and unachievable rates for communication. The strong converse capacities for quantum communication and private communication have been considered in a general context previously in~\cite{TWW17,WTB16}. We let $Q^{\dag}(\NN_{p})$, $P^{\dag}(\NN_{p})$, $Q_{\leftrightarrow}^{\dag}(\NN_{p})$, and $P_{\!\leftrightarrow}^{\dag}(\NN_{p})$ denote the various strong converse capacities for the communication scenarios mentioned above. By definition, the inequality $Q(\NN_{p})\leq Q^{\dag}(\NN_{p})$ holds, and similar relations exist between the other capacities and their strong converse counterparts (see, e.g., Eqs.~(5.6)--(5.13) of~\cite{WTB16}). 

\medskip
\paragraph*{Results.} In this paper, we completely solve all of the aforementioned eight capacities of the BDCs, finding that they all coincide and are given by the following simple expression:
\bb
\CC(\NN_p) \coloneqq&\ \log_{2}(2\pi)\!-\!h(p) \\
=&\ Q(\NN_{p}) = P(\NN_{p}) = Q_{\leftrightarrow}(\NN_{p}) = P_{\!\leftrightarrow}(\NN_{p}) \\
=&\ Q^{\dag}\!(\NN_{p}) = P^{\dag}\!(\NN_{p}) = Q_{\leftrightarrow}^{\dag} (\NN_{p}) = P_{\!\!\leftrightarrow}^{\dag}(\NN_{p}) ,
\label{eq:main-result} 
\ee
where
\bb
h(p)\coloneqq-\int d\phi\ p(\phi)\log_{2}(p(\phi))
\ee
is the differential entropy of the probability density $p$. %As discussed in the next section, our result extends to all multimode bosonic dephasing channels, thus closing out the problem in general.
Section~III.B of the Supplementary Information contains a detailed derivation of the above result. We note here that the first expression in~\eqref{eq:main-result} can be written in terms of the relative entropy as
\bb
\log_{2}(2\pi)-h(p)=D(p\Vert u),
\ee
where $u$ is the uniform probability density on the interval $\left[-\pi,\pi\right]$, and the relative entropy is defined as
\bb
D(r\Vert s)\coloneqq\int d\phi\ r(\phi)\log_{2}\!\left(\frac{r(\phi)}{s(\phi)}\right)
\ee
for general probability densities $r$ and $s$. By invoking basic properties of relative entropy~\cite{vanErven2014}, this rewriting indicates that all of the capacities are strictly positive unless the density $p$ is uniform, which represents a complete dephasing of the channel input state.
%non-negative for every probability density~$p$ and they are equal to zero if and only if the density $p$ is uniform, which represents a complete dephasing of the channel input state.

As Eq.~\eqref{eq:main-result} indicates, there is a remarkable simplification of the capacities for BDCs. The ultimate rate of private communication over these channels is no larger than the ultimate rate for quantum communication. Furthermore, unlimited classical communication between the sender and receiver does not enhance the capacities. Finally, the strong converse property holds, meaning that the rate $D(p\Vert u)$ represents a very sharp dividing line between possible and impossible communication rates. As mentioned in the introduction, since dephasing is a prominent source of noise in both quantum communication and computation, we expect our finding to have practical relevance in both scenarios. Based on the recent findings of~\cite{FMS22}, we expect that $\left[ D(p\Vert u)\right]^{-1}$ can be related to the ultimate overhead (ratio of physical systems to logical qubits) of fault-tolerant quantum computation with superconducting systems, but further work is needed to demonstrate this definitively.

Our results can be extended to all multimode BDCs, which act simultaneously on a collection of $m$ bosonic modes with photon number operators $a_j^\dag a_j$ as
\bb
\NN_p^{(m)}(\rho) \coloneqq \int_{[-\pi,\pi]^m} \hspace{-1ex} d^m \vb{\upphi}\ p(\vb{\upphi})\, e^{-i \sum_j a^\dag_{\!j} a_{\!j}^{\phantom{\dag}} \phi_j} \rho\, e^{i\sum_j a^\dag_{\!j} a_{\!j}^{\phantom{\dag}} \phi_j} ,
\label{Np_multimode}
\ee
where $\vb{\upphi} \coloneqq (\phi_1, \ldots, \phi_m)$ and $p$ is a probability density function on the hypercube $[-\pi,\pi]^m$. The eight capacities listed in~\eqref{eq:main-result} are all equal also for the channel $\NN_p^{(m)}$, and we denote them by $\CC\big(\NN_p^{(m)}\big)$. They are given by the formula
\bb
\CC\big(\NN_p^{(m)}\big) = m\log_2(2\pi) - h(p)\, ,
\label{multimode_capacities}
\ee
where
\bb
h(p) = -\int _{[-\pi,\pi]^m}  d^m \vb{\upphi}\ p(\vb{\upphi}) \log_2 (p(\vb{\upphi}))\, ,
\ee
constituting a straightforward generalization of~\eqref{eq:main-result}. As a special case of~\eqref{Np_multimode}, when $p$ is concentrated on the line $\vb{\upphi} = (\phi,\ldots,\phi)$ and $\phi\in [-\pi,\pi]$ is uniformly distributed, one obtains the completely dephasing channel considered in~\cite{Fanizza2021squeezingenhanced, Z21}.
%As our result confirms, the quantum and private capacities of this channel (without an energy constraint) are infinite for $m\geq 2$. This divergence is simply an artifact of the model, which requires all entries of $\vb{\upphi}$ to be exactly equal. When this assumption is dropped in favor of an alternative probability density function~$p$, our result yields a finite value of the capacity.

The most paradigmatic example of a BDC is that corresponding to a normal distribution $\widetilde{p}_\gamma(\phi) \coloneqq (2\pi\gamma)^{-1/2} e^{-\phi^2/(2\gamma)}$ of $\phi$ over the whole real line. This is the main example studied in~\cite{JC10,Arqand2020}, and it is based on a physical model discussed in those works. Here, $\gamma>0$ parametrizes the uncertainty of the rotation angle in~\eqref{eq:bdc-def}: the larger $\gamma$, the stronger the dephasing. Since values of $\phi$ that differ modulo $2\pi$ can be identified, we obtain as an effective distribution $p$ on $[-\pi,\pi]$ the \emph{wrapped normal distribution}
\bb
p_{\gamma}(\phi) \coloneqq \frac{1}{\sqrt{2\pi\gamma}} \sum_{k=-\infty}^{+\infty} e^{-\frac{1}{2\gamma} (\phi+2\pi k)^2} .
\label{wrapped_Gaussian}
\ee
The matrix $T_{p_\gamma}$ obtained by plugging this distribution into~\eqref{eq:toeplitz-bos-deph} has entries $(T_{p_\gamma})_{nm} = e^{-\frac{\gamma}{2}(n-m)^2}$, and therefore the corresponding BDC is the one discussed in the introduction. We find that
%The wrapped normal distribution was first found in connection with Brownian motion on a circle~\cite{HAAS-LORENTZ}, and has been widely studied in the mathematical literature since~\cite{Breitenberger1963,PAPADIMITRIOU}. Using the expression for its differential entropy from~\cite[\S~3.3]{PAPADIMITRIOU}, we find that
\bb
\CC(\NN_{p_\gamma}) &= \log_2 \varphi( e^{-\gamma} ) + \frac{2}{\ln 2} \sum_{k=1}^\infty \frac{(-1)^{k-1} e^{-\frac{\gamma}{2}(k^2+k)}}{k\left(1-e^{-k\gamma}\right)}\, ,
\label{capacities_wrapped_Gaussian}
\ee
where $\varphi(q) \coloneqq \prod_{k=1}^\infty \left(1-q^k\right)$ is the Euler function. See Section~IV.A of the Supplementary Information for details. In the physically relevant limit $\gamma \lesssim 1$, $p_\gamma$ and $\widetilde{p}_\gamma$ are both concentrated around $0$, and their entropies are nearly identical. In this regime
\bb
\mathscr{C}(\NN_{p_\gamma}) &\approx \frac12 \log_2\frac{2\pi}{e\gamma} \\
&\approx \left(0.604 + \frac12\log_2\frac{1}{\gamma}\right)\, \mathrm{bits}\big/\mathrm{channel\ use}\, ,
\ee
which demarcates the ultimate limitations on quantum and private communication in the presence of small dephasing noise. In the opposite case $\gamma\gg 1$ we obtain that
\bb
\mathscr{C}(\NN_{p_\gamma}) \approx \frac{e^{-\gamma}}{\ln 2}\, .
\ee
The above formula generalizes and makes quantitative the claim found in~\cite[\S~VI]{Arqand2020} that the quantum capacity of $\NN_{p_\gamma}$ vanishes exponentially for large $\gamma$. In Figure~\ref{capacities_fig}, we plot the capacity formula~\eqref{capacities_wrapped_Gaussian} as a function of $\gamma$, comparing it with the capacities $\CC(\NN_p)$ obtained for other important probability distributions $p$ on the circle.

\begin{figure}
\includegraphics[scale=0.95]{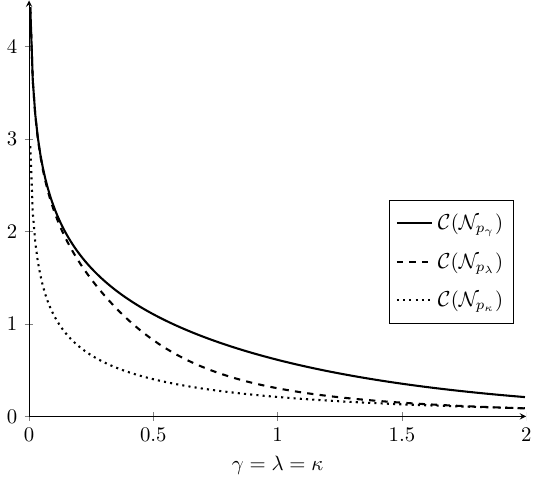}
\caption{The capacities of the BDCs associated with the wrapped normal distribution $p_\gamma$, the von Mises distribution $p_\lambda$, and the wrapped Cauchy distribution $p_\kappa$. The units of the vertical axis are qubits or private bits per channel use, and the horizontal axis features the main parameter governing the various distributions. The wrapped normal distribution is given by~\eqref{wrapped_Gaussian}, it gives a Gaussian-modulated dephasing $(T_{p_\gamma})_{nm} = e^{-\gamma (n-m)^2/2}$, and the capacity of the associated BDC $\NN_{p_\gamma}$ is given by~\eqref{capacities_wrapped_Gaussian}. The von Mises distribution $p_\lambda$ is a better analogue of the normal distribution in the case of a circle. It is given by $p_\lambda(\phi) \coloneqq \frac{e^{\cos(\phi)/\lambda}}{2\pi\, I_0(1/\lambda)}$, where $\lambda>0$ is a scale parameter analogous to $\gamma$ above, and $I_k$ is a modified Bessel function of the first kind. The obtained dephasing matrix has entries $(T_{p_\lambda})_{nm} = \tfrac{I_{|n-m|}(1/\lambda)}{I_0(1/\lambda)}$, and the capacities of the BDC $\NN_{p_\lambda}$ can be expressed as $\mathscr{C}(\NN_{p_\lambda}) = \frac{1}{\ln 2} \frac{I_1(1/\lambda)}{\lambda\, I_0(1/\lambda)} - \log_2 I_0(1/\lambda)$. Finally, the wrapped Cauchy distribution defined by $p_\kappa(\phi) \coloneqq \frac{1}{2\pi} \frac{\sinh\sqrt{\kappa}}{\cosh\sqrt{\kappa} -\cos\phi}$ corresponds to a dephasing matrix $(T_{p_\kappa})_{nm} = e^{-\sqrt{\kappa} |n-m|}$; it yields a capacity equal to $\CC(\NN_{p_\kappa}) = - \log_2\!\big(1-e^{-2\sqrt{\kappa}}\big)$.}
\label{capacities_fig}
\end{figure}

\medskip
\paragraph*{Discussion.} Our main result represents important progress for quantum information theory, solving the capacities of a physically relevant class of non-Gaussian bosonic channels. While many capacities of bosonic Gaussian channels have been solved in prior work~\cite{holwer,GGLMSY04,WPG07,Giovadd,GLMS03,PLOB,WTB16}, we are not aware of any other class of non-Gaussian channels that represent relevant models of noise in bosonic systems and whose capacity can be computed to yield a nontrivial value (neither zero nor infinite).

Our findings have non-trivial implications for the design of quantum error-correcting codes~\cite{Shor95,LB13} that encode and protect quantum information against the deleterious effects of BDCs. In particular, there is no superadditivity effect that occurs, as is the case with other quantum channels such as the depolarizing and dephrasure channels~\cite{DSS98,SS07,LLS18}. Thus, we now know that the random selection schemes of~\cite{ieee2005dev, devetak2005} are optimal designs for BDCs. It would be interesting to design quantum polar codes tailored to BDCs, as these codes are known to be capacity-achieving for certain kinds of finite-dimensional channels~\cite{RDR12,RW14}. As stated previously, another implication of our findings is that classical communication between sender and receiver does not increase the quantum and private capacities of BDCs.

Our formula can be seen as a natural generalization to bosonic systems of that given in~\cite{Devetak-Shor,TWW17,PLOB} for the quantum and private capacities of the qudit dephasing channel. However, the similarity of the final formula should not obscure the fact that the techniques used for its derivation are quite different. In particular, a key technical tool employed here is the Szeg\H{o} theorem from asymptotic linear algebra~\cite{Szego1920,SerraCapizzano2002}, in addition to a teleportation~\cite{BBCJPW93} simulation argument that is rather different from those presented previously~\cite{BDSW96,NFC09,WPG07,Alex-Master,PLOB,WTB16}.

The collapse that occurs in~\eqref{eq:main-result}, where eight different capacities are shown to coincide, also occurs for the quantum-limited bosonic amplifier channel, as a consequence of the findings of~\cite{WPG07,PLOB,Mark-energy-constrained,WTB16}. It would be interesting to determine other channels of physical interest for which this collapse occurs. It is known that this kind of collapse does not occur for the quantum erasure and pure-loss bosonic channels, because classical feedback from receiver to sender can increase the quantum and private capacities of these channels~\cite{erasure,Pirandola2009,PLOB}. Such an increase has long been known to have practical implications for the design of quantum key distribution protocols, as discussed in~\cite{Pirandola2009,PLOB}.

Going forward from here, it is of interest to address the capacities of bosonic lossy dephasing channels, in which both loss and dephasing act at the same time. Such channels are modeled as the serial concatenation $\mathcal{L}_{\eta} \circ \NN_p$, where $\mathcal{L}_{\eta}$ is a pure loss channel of transmissivity $\eta \in [0,1]$; they provide realistic noise models for communication and computation, given that both kinds of noises are relevant in these systems~\cite{LXJR22}. Our result here, combined with the main result of~\cite{WPG07} and a data-processing bottlenecking argument, leads to the following upper bound on the quantum and private capacities of the bosonic lossy dephasing channel:
\bb
Q(\mathcal{L}_{\eta} \circ \NN_p) & \leq
P(\mathcal{L}_{\eta} \circ \NN_p) \\
& \leq \min\{P(\mathcal{L}_{\eta}), P(\NN_p)\} \\
& = \min\big\{\big(\log_2(\eta / (1-\eta))\big)_+,\,D(p\Vert u)\big\}\, ,
\ee
where $x_+\coloneqq \max\{x,0\}$. By the same argument, but invoking the results of~\cite{PLOB,WTB16}, the following upper bounds hold for the quantum and private capacities assisted by classical communication:
\bb
Q_{\leftrightarrow}(\mathcal{L}_{\eta} \circ \NN_p) &\leq Q^\dag_{\leftrightarrow}(\mathcal{L}_{\eta}  \circ \NN_p),\ P_{\!\leftrightarrow}(\mathcal{L}_{\eta} \circ \NN_p) \\
& \leq P^\dag_{\!\leftrightarrow}(\mathcal{L}_{\eta}  \circ \NN_p) \\
& \leq \min\{\log_2(1 / (1-\eta)),D(p\Vert u)\}.
\ee
The same data-processing argument can be employed for BDCs composed with other common bosonic Gaussian channels in order to obtain upper bounds on the composed channels' capacities, while using known upper bounds from prior work~\cite{PLOB,WTB16,Sharma2018,Rosati2018,Noh2019,FKG21}.

It also remains open to determine the energy-constrained quantum and private capacities of BDCs, as well as their classical-communication-assisted counterparts~\cite{Arqand2020,Arqand2021}. Note that the lower bound in~\eqref{eq:max-mixed-rate} is a legitimate lower bound on the energy-constrained quantum capacity of $\NN_p$ when the mean photon number of the channel input cannot exceed $(d-1)/2$. Also, it is clear that the energy-constrained classical capacity of $\NN_p$ is equal to $g(E)\coloneqq (E+1)\log_2(E+1) - E\log_2 E$, where $E$ is the energy constraint. This identity depends essentially on the fact that Fock states can be perfectly transmitted through any BDC~\cite[\S~3.1]{Fanizza2021squeezingenhanced}. Finally, it is an open question to determine the energy-constrained entanglement-assisted classical capacity of BDCs~\cite{Holevo2004}.

In conclusion, in this work we have found an analytic expression for the quantum and private, assisted and unassisted, weak and strong converse capacities of all multimode bosonic dephasing channels, solving a problem that had been open for over a decade. BDCs are among the first non-Gaussian channels for which these capacities are calculated.

\medskip \paragraph*{Acknowledgements} --- We thank Stefano Mancini for discussions. LL was partially supported by the Alexander von Humboldt Foundation. MMW acknowledges support from the National Science Foundation under grant no.~2014010. 

\medskip
\noindent\textbf{Author contributions} --- Both authors contributed to all aspects of this manuscript and to the writing of the paper.

\medskip
\noindent\textbf{Supplementary Information} is available for this paper.

\medskip
\noindent\textbf{Competing interest} --- The authors declare no competing interests.

%\medskip
%\noindent\textbf{Data availability} --- No data sets were generated during this study.

\bibliography{Ref,biblio}
\let\addcontentsline\oldaddcontentsline

\tocless{\section*}{Methods}

In this section, we provide a short overview of the techniques used to prove our main result~\eqref{eq:main-result}. We establish the following two inequalities:
\begin{align}
Q(\NN_{p}) &  \geq D(p\Vert u),\label{eq:lower-bnd-cap}\\
P_{\!\leftrightarrow}^{\dag}(\NN_{p}) &  \leq D(p\Vert
u).\label{eq:upper-bnd-cap}%
\end{align}
Note that~\eqref{eq:lower-bnd-cap} and~\eqref{eq:upper-bnd-cap} together imply the main result, because $Q(\NN_{p})$ is the smallest among all of the capacities listed and $P_{\!\leftrightarrow}^{\dag}(\NN_{p})$ is the largest. For a precise ordering of the various capacities, see~\cite[Eq.~(5.6)--(5.13)]{WTB16}.
%The proofs of the inequalities in~\eqref{eq:lower-bnd-cap} and~\eqref{eq:upper-bnd-cap} are quite different, and we provide an overview of them in the Methods section~\footnote{The supplementary material contains detailed derivations for interested readers.}.

To prove~\eqref{eq:lower-bnd-cap}, let us recall that the coherent information of a quantum channel is a lower bound on its quantum capacity~\cite{W17}. Specifically, the following inequality holds for a general channel $\NN$:%
\bb
Q(\NN) \geq \sup_{\rho} \left\{H(\NN(\rho))-H((\operatorname{id}\otimes\NN)(\psi^{\rho})) \right\} ,\label{eq:coh-info-low-bnd}%
\ee
where the von Neumann entropy of a state $\sigma$ is defined as $H(\sigma)\coloneqq -\Tr[\sigma\log_{2}\sigma]$, the optimization is over every state~$\rho$ that can be transmitted into the channel $\NN$, and $\psi^{\rho}$ is a purification of $\rho$ (such that one recovers $\rho$ after a partial trace). We can apply this lower bound to the BDC $\NN_{p}$. For a fixed photon number $d-1$, let us choose $\rho$ to be the maximally mixed state of dimension $d$, i.e., $\rho=\tau_{d}\coloneqq \frac{1}{d}\sum_{n=0}^{d-1}\ketbra{n}$. This state is purified by the maximally entangled state $\Phi_{d} \coloneqq \frac{1}{d}\sum_{n,m=0}^{d-1}\ketbraa{n}{m} \otimes \ketbraa{n}{m}$. To evaluate the first term in~\eqref{eq:coh-info-low-bnd}, consider from~\eqref{eq:action-bdc-number-basis} and~\eqref{eq:toeplitz-bos-deph} that the output state is maximally mixed, i.e., $\NN_p(\tau_{d})=\tau_{d}$, because the input state $\tau_{d}$ has no off-diagonal elements and the diagonal elements of the matrix $T_p$ in~\eqref{eq:toeplitz-bos-deph} are all equal to one. Thus, we find that $H(\NN_p(\tau_d))=\log_{2}d$. For the second term in~\eqref{eq:coh-info-low-bnd}, we again apply~\eqref{eq:action-bdc-number-basis} and~\eqref{eq:toeplitz-bos-deph} to determine that
\bb
\omega_{p,d} \coloneqq&\ (\operatorname{id}\otimes\NN_p)(\Phi_{d})\\
=&\ \frac{1}{d}\sum_{n,m=0}^{d-1}(T_p)_{nm}\ketbraa{n}{m} \otimes \ketbraa{n}{m}. \label{eq:choi-state-bdc}
\ee
As the entropy is invariant under the action of an isometry, and the isometry $\ket{n}\rightarrow\ket{n}\ket{n}$ takes the state 
\bb
\frac{T^{(d)}_p}{d} \coloneqq \frac{1}{d} \sum_{n,m=0}^{d-1}(T_p)_{nm}\ketbraa{n}{m}
\ee
to $\omega_{p,d}$, we find that the entropy $H(\omega_{p,d})$ reduces to%
\bb
H(\omega_{p,d})=H\Big(T^{(d)}_p\!\big/d\Big).
\ee
By a straightforward calculation, we then find that%
\bb
H\left(\NN_p(\tau_{d})\right) - H(\omega_{p,d}) &=\log_{2}d - H\Big(T^{(d)}_p\!\big/d\Big)\\
&= \frac{1}{d}\Tr \left[T^{(d)}_p \log_2 T^{(d)}_p\right]\,.
\label{eq:max-mixed-rate}
\ee
This establishes the value in~\eqref{eq:max-mixed-rate} to be an achievable rate for quantum communication over $\NN_{p}$. Since this lower bound holds for every photon number $d-1\in\mathbb{N}$, we can then take the limit $d\rightarrow\infty$ and apply the Szeg\H{o} theorem~\cite{Szego1920,SerraCapizzano2002} to conclude that the following value is also an achievable rate:
\bb
&\lim_{d\rightarrow\infty}\frac{1}{d} \Tr \left[T^{(d)}_p \log_2 T^{(d)}_p\right]\\
&\qquad =\frac{1}{2\pi}\int_{-\pi}^{\pi}d\phi\ 2\pi p(\phi)\log_{2}\left(  2\pi
p(\phi)\right)  \\
&\qquad =D(p\Vert u).
\ee
Thus, this establishes the lower bound in~\eqref{eq:lower-bnd-cap}.

To prove the upper bound in~\eqref{eq:upper-bnd-cap}, we apply a modified teleportation simulation argument. This kind of argument was introduced in~\cite[Section~V]{BDSW96}, for the specific purpose of finding upper bounds on the quantum capacity assisted by classical communication, and it has been employed in a number of works since then~\cite{NFC09,WPG07,Alex-Master,PLOB,WTB16}. Since we are interested in bounding the strong converse secret key agreement capacity $P_{\!\leftrightarrow}^{\dag}(\NN_{p})$, we apply reasoning similar to that given in~\cite{WTB16} (here see also~\cite{private,Horodecki2009}). However, there are some critical differences in our approach here.

To begin, let us again consider the state in~\eqref{eq:choi-state-bdc}. As we show in Section~III.B of the Supplementary Information, by performing the standard teleportation protocol~\cite{BBCJPW93} with the state in~\eqref{eq:choi-state-bdc} as the entangled resource state, rather than the maximally entangled state, we can simulate the action of the channel $\NN_p$ on a fixed input state, up to an error that vanishes in the limit as $d\to \infty$. This key insight demonstrates that the state in~\eqref{eq:choi-state-bdc} is approximately equivalent in a resource-theoretic sense to the channel $\NN_p$. In more detail, we can express this observation in terms of the following equality: for every state $\rho$, it holds that
\bb
\lim_{d \to \infty}\left \Vert (\operatorname{id} \otimes \NN_p)(\rho) - (\operatorname{id} \otimes \NN_{p,d})(\rho) \right\Vert_1 = 0,
\label{eq:tp-sim-error}
\ee
where $\NN_{p,d}(\sigma) \coloneqq \TT(\sigma \otimes \omega_{p,d})$ is the channel resulting from the teleportation simulation. That is, the simulating channel $\NN_{p,d}$ is realized by sending one subsystem of the maximally entangled state $\Phi_d$ through $\NN_p$, which generates $\omega_{p,d}$, and then acting on the input state $\sigma$ and the resource state $\omega_{p,d}$ with the standard teleportation protocol~$\TT$.
By invoking the main insight from~\cite{private,Horodecki2009} (as used later in~\cite{TGW14IEEE}), we next note that a protocol for secret key agreement over the channel is equivalent to one for which the goal is to distill a bipartite private state. Such a protocol involves only two parties, and thus the tools of entanglement theory come into play~\cite{private,Horodecki2009}.

Now let $\mathcal{P}_{n,\epsilon}$ denote a general, fixed protocol for secret key agreement, involving $n$ uses of the channel $\NN_p$ and achieving an error $\epsilon$ for generating a bipartite private state of rate $R_{n,\epsilon}$ (where the units of $R_{n,\epsilon}$ are secret key bits per channel use). By using the two aforementioned tools, teleportation simulation and the reduction from secret key agreement to bipartite private distillation, the protocol $\mathcal{P}_{n,\epsilon}$ can be approximately simulated by the action of a single LOCC channel on $n$ copies of the resource state $\omega_{p,d}$. Associated with this simulation are two trace norm errors $\epsilon$ and $\delta_{d}$, the first of which is the error of the original protocol $\mathcal{P}_{n,\epsilon}$ in producing the desired bipartite private state and the second of which is the error of the simulation. We then invoke~\cite[Eq.~(5.37)]{WTB16} to establish the following inequality, which, for the fixed protocol $\mathcal{P}_{n,\epsilon}$, relates the rate $R_{n,\epsilon}$ at which secret key can be distilled to the aforementioned errors and an entanglement measure called sandwiched R\'enyi relative entropy of entanglement:
\bb
R_{n,\epsilon} \leq \widetilde{E}_{R,\alpha}(\omega_{p,d}) + \frac{2\alpha}{n(\alpha-1)}\log_2\!\left(\frac{1}{1-\delta_{d} - \epsilon}\right),
\ee
where $\alpha > 1$ and the sandwiched R\'enyi relative entropy of entanglement of a general bipartite state $\rho$ is defined as~\cite{WTB16}
\bb
\widetilde{E}_{R,\alpha}(\rho) \coloneqq \inf_{\sigma \in \operatorname{SEP}} \frac{2\alpha}{\alpha-1}\log_2\!\left\Vert \rho^{1/2} \sigma^{(1-\alpha)/2\alpha}\right\Vert_{2\alpha},
\ee
with SEP denoting the set of separable (unentangled) states. By choosing the separable state to be $(\operatorname{id} \otimes \NN_p)\big(\overline{\Phi}_d\big)$, where $\overline{\Phi}_d \coloneqq \frac{1}{d}\sum_{n=0}^{d-1} \ketbra{n}\otimes \ketbra{n}$, we find that
\bb
\widetilde{E}_{R,\alpha}(\omega_{p,d}) \leq \frac{1}{\alpha-1}\log_2\frac{1}{d} \Tr\Big[\big(T_p^{(d)}\big)^\alpha\Big] .
\ee
We refer the reader to Section~III.B of the Supplementary Information for a detailed derivation. Thus, we find that the following rate upper bound holds for the secret key agreement protocol $\mathcal{P}_{n,\epsilon}$ for all $d \in \mathbb{N}$:
\begin{multline}
R_{n,\epsilon} \leq \frac{1}{\alpha-1}\log_2\frac{1}{d} \Tr\Big[\big(T_p^{(d)}\big)^\alpha\Big] \\
+ \frac{2\alpha}{n(\alpha-1)}\log_2\!\left(\frac{1}{1-\delta_{d} - \epsilon}\right),
\end{multline}
Since this bound holds for all $d\in\mathbb{N}$,
we can take the limit $d\to \infty$ and then arrive at the following upper bound:
\bb
R_{n,\epsilon} & \leq \liminf_{d\to \infty}\Bigg(\frac{1}{\alpha-1}\log_2\frac{1}{d} \Tr\Big[\big(T_p^{(d)}\big)^\alpha\Big] \label{eq:uniform-bnd-skac}\\
 & \qquad \qquad \qquad + \frac{2\alpha}{n(\alpha-1)}\log_2\!\left(\frac{1}{1-\delta_{d} - \epsilon}\right) \Bigg) \\ 
& = D_{\alpha}(p\Vert u) + \frac{2\alpha}{n(\alpha-1)}\log_2\!\left(\frac{1}{1 - \epsilon}\right).
\ee
In the above, we again applied the Szeg\H{o} theorem~\cite{Szego1920,SerraCapizzano2002} to conclude that
\bb
\lim_{d\to \infty}\frac{1}{\alpha-1}\log_2\frac{1}{d} \Tr\Big[\big(T_p^{(d)}\big)^\alpha\Big] = D_{\alpha}(p\Vert u)\, .
\ee
We also used the fact that $\lim_{d\to \infty} \delta_d = 0$, which is a consequence of~\eqref{eq:tp-sim-error}.
The bound in the last line only depends on the error $\epsilon$ of the original protocol $\mathcal{P}_{n,\epsilon}$ and the R\'enyi relative entropy
\bb
D_{\alpha}(p\Vert u)\coloneqq \frac{1}{\alpha-1}\log_2\int_{-\pi}^{\pi} d\phi\,  p(\phi)^{\alpha} u(\phi)^{1-\alpha}.
\ee
As such, it is a uniform upper bound, applying to all $n$-round secret-key-agreement protocols that generate a private state of rate $R_{n,\epsilon}$ and with error $\epsilon$. Now noting that the $n$-shot secret key agreement capacity $P_{\!\leftrightarrow}(\NN_p,n,\epsilon)$ is defined as the largest rate $R_{n,\epsilon}$ that can be achieved by using the channel $\NN_p$ a total of $n$ times along with classical communication for free, while allowing for $\epsilon$ error, it follows from the uniform bound in~\eqref{eq:uniform-bnd-skac} that
\bb
P_{\!\leftrightarrow}(\NN_p,n,\epsilon) \leq D_{\alpha}(p\Vert u) + \frac{2\alpha}{n(\alpha-1)}\log_2\!\left(\frac{1}{1 - \epsilon}\right),
\ee
holding for all $\alpha >1$. %Noting
Remembering that the strong converse secret-key-agreement capacity is defined as
\bb
P_{\!\leftrightarrow}^\dag(\NN_p) \coloneqq \sup_{\epsilon\in(0,1)}\limsup_{n\to \infty} P_{\!\leftrightarrow}(\NN_p,n,\epsilon)
\ee
we take the limit $n\to \infty$ to find that
\bb
& P_{\!\leftrightarrow}^\dag(\NN_p) \\
& \leq \sup_{\epsilon\in(0,1)}\limsup_{n\to \infty} \left\{D_{\alpha}(p\Vert u) + \frac{2\alpha}{n(\alpha-1)}\log_2\!\left(\frac{1}{1 - \epsilon}\right) \right\} \\
& =D_{\alpha}(p\Vert u)\, .
\ee
This upper bound holds for all $\alpha>1$. Thus, we can finally take the $\alpha\to 1$ limit, and use a basic property of R\'enyi relative entropy~\cite{vanErven2014} to conclude the desired upper bound:
\bb
P_{\!\leftrightarrow}^\dag(\NN_p) \leq \lim_{\alpha \to 1}D_{\alpha}(p\Vert u)
=D(p\Vert u)\, .
\ee
This concludes the proof of the capacity formula~\eqref{eq:main-result} for the BDC. The argument required to establish its multimode generalization~\eqref{multimode_capacities} is very similar, with the only substantial technical difference being the application of the \emph{multi-index} Szeg\H{o} theorem~\cite{SerraCapizzano2002}. See Section~III.C of the Supplementary Information for details.

\medskip
\noindent\textbf{Data availability} --- No data sets were generated during this study.

\clearpage
\fakepart{Supplemental Material}

\onecolumngrid
\begin{center}
\vspace*{\baselineskip}
{\textbf{\large Supplemental Material}}\\ %Exact solution for the quantum and private capacities of bosonic dephasing channels
\end{center}

\renewcommand{\theequation}{S\arabic{equation}}
\renewcommand{\thethm}{S\arabic{thm}}
\renewcommand{\thefigure}{S\arabic{figure}}
\setcounter{page}{1}
\setcounter{section}{0}
\setcounter{equation}{0}
\makeatletter

\setcounter{secnumdepth}{2}

%\maketitle

\tableofcontents

\newpage

\section{Summary of contents}

The supplementary material provides detailed arguments for the claims made in the main text. We begin with some background material on quantum states, channels, entropies, continuous-variable systems, capacities, and bosonic dephasing channels. After that, we prove our main claims about the capacities of all bosonic dephasing channels. Finally, we discuss some examples of bosonic dephasing channels of physical interest.

\section{Preliminaries, notation, and definitions}

\subsection{Quantum states and channels}

An arbitrary quantum system is mathematically represented by a separable complex Hilbert space~$\HH$. We start by reviewing a few basic concepts from the theory of operators acting on a Hilbert space $\HH$. An operator $X:\HH\to \HH$ acting on $\HH$ is  \deff{bounded} if its \deff{operator norm} $\|X\|_\infty \coloneqq \sup_{\ket{\psi}\in \HH,\, \|\ket{\psi}\|\leq 1} \left\|X\ket{\psi}\right\|$ is finite, i.e., if $\|X\|_\infty<\infty$. The Banach space of bounded operators acting on $\HH$ equipped with the norm $\|\cdot\|_\infty$ will be sometimes denoted by $\B(\HH)$. A bounded operator $X\in \B(\HH)$ is  \deff{positive semi-definite} if $\braket{\psi|X|\psi}\geq 0$ for all $\ket{\psi}\in \HH$. %The set of positive semi-definite bounded operators forms a cone, denoted here by $\B_+(\HH)$.

%$\B(\HH)$ can be thought of as the dual of the Banach space $\T(\HH)$ of trace class operators, where a bounded operator $T:\HH\to \HH$ is said to be of trace class if $\sum_{j=0}^\infty \braket{e_j|\sqrt{T^\dag T}|e_j}<\infty$ converges for some --- and therefore for all --- orthonormal bases $\{ \ket{e_j} \}_{j\in \N}$ of $\HH$. Here, $\sqrt{T^\dag T}$ stands for the (unique) positive square root of the positive semi-definite bounded operator $T^\dag T$, where $T^\dag$ denotes the adjoint of $T$.
A bounded operator $T$ such that the series defining $\Tr|T| = \Tr \sqrt{T^\dag T} \eqqcolon \|T\|_1 <\infty$ converges is said to be of \deff{trace class}. Trace class operators acting on $\HH$ form another Banach space, denoted by $\T(\HH)$, once they are equipped with the trace norm $\| \cdot \|_1$. The set of positive semi-definite trace class operators forms a cone, denoted here by $\T_+(\HH)$. Since trace class operators are compact, the spectral theorem applies~\cite[Theorem~VI.16]{REED}. This means that every $T\in \T(\HH)$ can be decomposed as $T = \sum_{k=0}^\infty t_k \ketbraa{e_k}{f_k}$, where $\|T\|_1 = \sum_k |t_k|<\infty$, and $\{\ket{e_k}\}_k$, $\{\ket{f_k}\}_k$ are orthonormal bases of $\HH$.

Quantum states of a system $A$ are described by \deff{density operators}, i.e., positive semi-definite trace class operators with trace $1$, acting on $\HH_A$. The distance between two density operators $\rho,\sigma$ acting on the same Hilbert space can be measured in two different but compatible ways, either with the \deff{trace distance} $\frac12 \|\rho-\sigma\|_1$, endowed with a direct operational interpretation via the Helstrom--Holevo theorem for state discrimination~\cite{HELSTROM, Holevo1976} or with the \deff{fidelity} $F(\rho,\sigma)\coloneqq \left\|\sqrt{\rho}\sqrt{\sigma}\right\|_1^2$~\cite{Uhlmann-fidelity}. Two fundamental relations known as the Fuchs--van de Graaf inequalities establish the essential equivalence of these two distance measures. They are as follows~\cite{Fuchs1999}:
\bb
1-\sqrt{F(\rho,\sigma)} \leq \frac12 \left\|\rho-\sigma\right\|_1 \leq \sqrt{1-F(\rho,\sigma)}\, .
\label{Fuchs_vdG}
\ee

Physical transformations between states of a system $A$ and states of a system $B$ are represented mathematically as \deff{quantum channels}, i.e., completely positive trace-preserving maps $\Lambda:\T(\HH_A)\to \T(\HH_B)$~\cite{Stinespring, Choi, KRAUS}. A linear map $\Lambda:\T(\HH_A) \to \T(\HH_B)$ is 
\begin{enumerate}[(i)]
\item positive if $\Lambda\left( \T_+(\HH_A) \right) \subseteq \T_+(\HH_B))$;
\item completely positive, if $\operatorname{id}_N \otimes \Lambda : \T\big(\C^N\otimes \HH_A\big) \to \T\big(\C^N\otimes \HH_B\big)$ is a positive map for all $N\in \N$, where $\operatorname{id}_N$ represents the identity channel acting on the space of $N\times N$ complex matrices;
\item trace preserving, if $\Tr \Lambda (X) = \Tr X$ holds for all trace class $X$.
\end{enumerate}

\subsection{Entropies and relative entropies}

%\begin{note}
%Throughout this paper we will employ logarithms, denoted with $\log_2$, that are all defined with respect to an arbitrary but fixed basis.
%\end{note}

Let $p,q$ be two probability density functions defined on the same measurable space $\pazocal{X}$ with measure $\mu$. For $\alpha\in (0,1)\cup(1,\infty)$, define their \deff{$\boldsymbol{\alpha}$-R\'enyi divergence} by~\cite{vanErven2014}
\bb
D_\alpha(p\|q) \coloneqq \frac{1}{\alpha-1} \log_2 \int_{\pazocal{X}} d\mu(x)\ p(x)^\alpha\, q(x)^{1-\alpha}\, .
\label{Renyi}
\ee
This definition can be extended to $\alpha\in \{0,1,\infty\}$~\cite[Definition~3]{vanErven2014} by taking suitable limits. For our purposes, it suffices to consider the \deff{Kullback--Leibler} divergence~\cite{Kullback-Leibler} obtained by taking the limit $\alpha\to 1^-$ in~\eqref{Renyi}. It is defined as
\bb
D_1(p\|q) = D(p\|q) \coloneqq \int_{\pazocal{X}}d\mu(x)\ p(x)\, \log_2 \frac{p(x)}{q(x)}\, .
\label{Kullback--Leibler}
\ee
The following technical result is important for this paper.

\begin{lemma}[{\cite[Theorems~3 and~5]{vanErven2014}}] \label{technical_Renyi_lemma}
For all fixed $p,q$, the $\alpha$-R\'enyi divergence is monotonically non-decreasing in $\alpha$. Moreover, $\lim_{\alpha\to 1^-} D_\alpha(p\|q) = D(p\|q)$, and if $D_{\alpha_0}(p\|q)<\infty$ for some $\alpha_0>1$ (and therefore $D_{\alpha}(p\|q)<\infty$ for all $\alpha\in(0, \alpha_0]$) then also
\bb
\lim_{\alpha\to 1^+} D_\alpha(p\|q) = D(p\|q)\, .
\label{limit_1+_Renyi}
\ee
\end{lemma}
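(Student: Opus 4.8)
The plan is to reduce all three assertions to elementary convexity facts about the single auxiliary function $\psi(\alpha)\coloneqq\ln\int_{\pazocal{X}}d\mu(x)\,p(x)^\alpha q(x)^{1-\alpha}$, so that $D_\alpha(p\|q)=\psi(\alpha)/\bigl((\alpha-1)\ln 2\bigr)$ for $\alpha\neq1$ (with the usual conventions $0^0=1$ and $q^{1-\alpha}=+\infty$ where $q=0$ and $\alpha>1$). Two observations carry the argument. First, $\psi$ is convex on $(0,\infty)$: this is immediate from H\"older's inequality applied to the identity $p^{\theta\alpha+(1-\theta)\beta}q^{1-\theta\alpha-(1-\theta)\beta}=(p^\alpha q^{1-\alpha})^\theta(p^\beta q^{1-\beta})^{1-\theta}$, which on integrating and taking logarithms gives $\psi(\theta\alpha+(1-\theta)\beta)\leq\theta\psi(\alpha)+(1-\theta)\psi(\beta)$. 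Second, $\psi(1)=\ln\int p=0$, and $\psi$ is finite with $\psi\leq 0$ on all of $(0,1]$ since weighted AM--GM gives $p^\alpha q^{1-\alpha}\leq\alpha p+(1-\alpha)q$, whence $\int p^\alpha q^{1-\alpha}\leq1$ there; on $(1,\infty)$ the quantity $\psi(\alpha)$ may be $+\infty$, and in fact $\psi(\alpha)=+\infty$ for every $\alpha>1$ unless $p\ll q$.

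For monotonicity, note that for $\alpha\neq1$ the quantity $D_\alpha(p\|q)\ln 2=\bigl(\psi(\alpha)-\psi(1)\bigr)/(\alpha-1)$ is precisely the slope of the chord of the graph of $\psi$ joining the fixed point $(1,0)$ to the variable point $(\alpha,\psi(\alpha))$; for a convex function such a chord slope is non-decreasing as the variable abscissa increases, which is exactly the claimed monotonicity of $\alpha\mapsto D_\alpha(p\|q)$ (the value $+\infty$ for $\alpha>1$, occurring when $p$ is not absolutely continuous with respect to $q$, is consistent with this). The same convexity gives continuity of $D_\alpha(p\|q)$ in $\alpha$ on any interval where $\psi$ is finite.

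For the limits: since the chord slopes from $(1,0)$ converge to the left derivative as $\alpha\uparrow1$, we get $\lim_{\alpha\to1^-}D_\alpha(p\|q)\ln2=\psi'(1^-)\in[0,+\infty]$, and it remains to identify $\psi'(1^-)$ with $D(p\|q)\ln 2$. Differentiating under the integral yields $\psi'(\alpha)=\bigl(\int p^\alpha q^{1-\alpha}\ln(p/q)\bigr)\big/\int p^\alpha q^{1-\alpha}$; as $\alpha\to1^-$ the denominator tends to $1$ and the numerator tends to $\int_{\{p>0\}}p\ln(p/q)\,d\mu=D(p\|q)\ln 2$, which one verifies by splitting $\pazocal{X}$ into $\{p>q\}$ and $\{p\leq q\}$ and using monotone convergence on the first (the integrand is nonnegative and increases to $p\ln(p/q)$) and dominated convergence on the second (with integrable majorant $\sqrt{pq}\,\ln(q/p)\leq 2q$ valid on $\{q\geq p\}$); when $p\not\ll q$ both $D(p\|q)$ and $\lim_{\alpha\to1^-}D_\alpha(p\|q)$ are $+\infty$, the latter because $\psi$ then jumps upward at $1$ (one has $\lim_{\alpha\uparrow1}\int p^\alpha q^{1-\alpha}=\int_{\{q>0\}}p<1$), so the identity holds trivially. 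For the right limit the hypothesis $D_{\alpha_0}(p\|q)<\infty$ for some $\alpha_0>1$ enters: it forces $p\ll q$ and $\psi(\alpha_0)<\infty$, so by convexity $\psi$ is finite, hence convex and continuous, on $[1,\alpha_0]$ and admits a right derivative at $1$; thus $\lim_{\alpha\to1^+}D_\alpha(p\|q)\ln2=\psi'(1^+)$, and the same differentiation-under-the-integral computation — now with the interchange licensed by the integrable majorant $p+p^{\alpha_0}q^{1-\alpha_0}$ for $p^\alpha q^{1-\alpha}$ and by the bound $p\,\bigl(\ln(p/q)\bigr)_+\leq(\alpha_0-1)^{-1}p^{\alpha_0}q^{1-\alpha_0}$ coming from $\ln x\leq x^{\alpha_0-1}/(\alpha_0-1)$ for $x\geq1$, both finite by hypothesis — identifies $\psi'(1^+)$ with $D(p\|q)\ln 2$.

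The crux of the whole argument is the justification of differentiation under the integral sign at $\alpha=1$. On the left this has to be carried out with no integrability assumption whatsoever, so one allows the value $+\infty$ and argues by a monotone/dominated-convergence split of the domain; on the right, an integrable majorant is needed and is supplied exactly by $D_{\alpha_0}(p\|q)<\infty$. This asymmetry is the structural reason the finiteness hypothesis is required for the $\alpha\to1^+$ statement but not for the $\alpha\to1^-$ one.
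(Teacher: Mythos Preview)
The paper does not prove this lemma at all; it is quoted verbatim as a citation of Theorems~3 and~5 of van~Erven--Harremo\"es, so there is no ``paper's own proof'' to compare against. Your argument is self-contained and correct, and in fact tracks the standard route taken in that reference: reduce everything to the convexity of $\psi(\alpha)=\ln\int p^\alpha q^{1-\alpha}$ (via H\"older), read monotonicity of $D_\alpha$ off as monotonicity of chord slopes of $\psi$ through the anchor point $(1,0)$, and identify the one-sided derivatives $\psi'(1^\pm)$ with $D(p\|q)\ln 2$ by a monotone/dominated-convergence split. The treatment of the $p\not\ll q$ case (where $\lim_{\alpha\uparrow 1}\int p^\alpha q^{1-\alpha}=\int_{\{q>0\}}p<1$ forces the chord slope to $+\infty$) and the majorant $p\,(\ln(p/q))_+\le(\alpha_0-1)^{-1}p^{\alpha_0}q^{1-\alpha_0}$ for the right limit are both clean.

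One minor expository point: you invoke the formula $\psi'(\alpha)=\bigl(\int p^\alpha q^{1-\alpha}\ln(p/q)\bigr)\big/\int p^\alpha q^{1-\alpha}$ at generic $\alpha\in(0,1)$ without separately justifying differentiation under the integral there, and then take a limit. This is easily fixed (bound $p^\alpha q^{1-\alpha}|\ln(p/q)|$ on compact subintervals using $|\ln x|\le x^{\pm\varepsilon}/\varepsilon$ and AM--GM), or bypassed altogether by working directly with the difference quotient $\bigl(\int p^\alpha q^{1-\alpha}-1\bigr)/(\alpha-1)$ and applying your same monotone/dominated-convergence split to \emph{that} integrand---which is really what your final paragraph is describing anyway. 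Either way, the argument is sound.
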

As a special case of the above formalism, one can define the \deff{differential R\'{e}nyi entropy} of a probability density function $p$ on $\pazocal{X}$ by setting
\bb
h_\alpha(p) \coloneqq \frac{1}{1-\alpha} \log_2 \int_{\pazocal{X}} d\mu(x)\ p(x)^\alpha ,
\ee
for all $\alpha\in (0,1)\cup(1,\infty)$. For $\alpha=1$ we obtain the standard \deff{differential entropy}, given by
\bb
h(p) \coloneqq - \int_{\pazocal{X}}d\mu(x)\ p(x) \log_2 p(x)\, ,
\label{differential_entropy}
\ee
whenever the integral is well defined.

We now consider entropies and relative entropies between quantum states. For the sake of simplicity we assume throughout this subsection that all quantum systems are finite dimensional. Indeed, in this paper we shall not consider entropies and relative entropies of infinite-dimensional states.

The most immediate way to extend~\eqref{Renyi} to the case of two quantum states $\rho,\sigma$ is to define the \deff{Petz--R\'enyi relative entropy}~\cite{PetzRenyi}
\bb
D_\alpha(\rho\|\sigma) \coloneqq \frac{1}{\alpha-1}\, \log_2 \Tr \rho^{\alpha}\sigma^{1-\alpha} ,
\label{Petz--Renyi}
\ee
where as usual $\alpha\in (0,1)\cup (1,\infty)$, and it is conventional to set $D_\alpha(\rho\|\sigma) = +\infty$ whenever $\alpha>1$ and $\supp\rho\not\subseteq \supp\sigma$, where $\supp X$ denotes the \deff{support} of $X$, i.e., the orthogonal complement of its kernel. Although~\eqref{Petz--Renyi} is a sensible definition, it is often helpful to consider an alternative quantity. The \deff{sandwiched $\boldsymbol{\alpha}$-R\'enyi relative entropy} is defined as~\cite{newRenyi, Wilde2014}
\bb
\widetilde{D}_\alpha(\rho\|\sigma) \coloneqq \frac{2\alpha}{\alpha-1} \log_2 \left\|\sigma^{\frac{1-\alpha}{2\alpha}}\rho^{\frac12} \right\|_{2\alpha} .
\label{sandwiched_Renyi}
\ee
Here, for $\beta>0$ we define the corresponding \deff{Schatten norm} of a matrix $X$ as
\bb
\|X\|_\beta \coloneqq \left(\Tr \Big[|X|^\beta\Big]\right)^{1/\beta} ,
\label{Schatten_norm}
\ee
where $|X|\coloneqq \sqrt{X^\dag X}$. As before, it is understood that $\widetilde{D}_\alpha(\rho\|\sigma) = +\infty$ when $\alpha>1$ and $\supp\rho\not\subseteq \supp\sigma$. Importantly, when $[\rho,\sigma]=0$, i.e., $\rho$ and $\sigma$ commute,~\eqref{Petz--Renyi} and~\eqref{sandwiched_Renyi} coincide, and are equal to the $\alpha$-R\'enyi divergence between the spectra of $\rho$ and $\sigma$. Namely,
\bb
[\rho,\sigma] = 0 \quad \Longrightarrow\quad \widetilde{D}_\alpha(\rho\|\sigma) = D_\alpha(\rho\|\sigma)\, .
\label{commute_Petz--Renyi}
\ee
Taking the limit as $\alpha\to 1$ of either~\eqref{Petz--Renyi} or~\eqref{sandwiched_Renyi} yields the (Umegaki) \deff{relative entropy}, given by~\cite{Umegaki1962, Lindblad1973, Hiai1991}
\bb
D(\rho\|\sigma) \coloneqq \lim_{\alpha \to 1} \widetilde{D}_\alpha(\rho\|\sigma) = \lim_{\alpha \to 1} D_\alpha(\rho\|\sigma) = \Tr\left[ \rho (\log_2 \rho - \log_2 \sigma)\right] .
\label{Umegaki}
\ee
The final quantity we need to define is the simplest of all, namely, the (von Neumann) \deff{entropy} of a density operator $\rho$:
\bb
S(\rho) \coloneqq - \Tr \left[ \rho\log_2 \rho\right] .
\label{von_Neumann}
\ee

\subsection{Continuous-variable systems}

A single-mode continuous-variable system is mathematically modelled by the Hilbert space $\HH_1\coloneqq L^2(\R)$, which comprises all square-integrable complex-valued functions over $\R$. The operators $x$ and $p \coloneqq -i\frac{d}{d x}$ satisfy the \deff{canonical commutation relation} $[x, p] = i \id$, where $\id$ denotes the identity operator (in this case, acting on $\HH_1$). Introducing the \deff{annihilation} and \deff{creation operators}
\bb
a \coloneqq \frac{x + i p}{\sqrt{2}}\, ,\qquad a^\dag \coloneqq \frac{x - i p}{\sqrt{2}}\, ,
\label{a_adag}
\ee
this can be recast in the form
\bb
[a, a^\dag] = \id \, .
\label{CCR}
\ee
Creation operators map the \deff{vacuum state} $\ket{0}$ to the \deff{Fock states}
\bb
\ket{k} \coloneqq \frac{(a^\dag)^k}{\sqrt{k!}}\, \ket{0}\, .
\label{Fock}
\ee
Fock states are eigenvectors of the \deff{photon number} operator $a^\dag a$, which satisfies
\bb
a^\dag a\,\ket{k} = k \ket{k}\, .
\label{Fock_eigenvectors}
\ee

\subsection{Unassisted capacities of quantum channels}

In this section, we briefly define the quantum and private capacities of a quantum channel. We begin with the quantum capacity. An $(|M|,\epsilon)$ code for quantum communication over the channel $\mathcal{N}_{A\to B}$ consists of an encoding channel $\mathcal{E}_{M\to A}$ and a decoding channel $\mathcal{D}_{B \to M}$ such that the channel fidelity of the coding scheme and the identity channel $\operatorname{id}_{M}$ is not smaller than $1-\epsilon$:
\bb
F( \operatorname{id}_{M},  \mathcal{D}_{B \to M}\circ \mathcal{N}_{A\to B}  \circ \mathcal{E}_{M\to A} ) \geq 1-\epsilon ,
\ee
where the channel fidelity of channels $\mathcal{N}_1$ and $\mathcal{N}_2$ is defined as \cite{GLN04}
\bb
F(\mathcal{N}_1,\mathcal{N}_2) \coloneqq \inf_{\rho} F ((\operatorname{id} \otimes \mathcal{N}_1)(\rho),(\operatorname{id} \otimes \mathcal{N}_2)(\rho)) ,
\ee
with the optimization over every bipartite state $\rho$ and the reference system allowed to be arbitrarily large. See Figure~\ref{protocol_Q_fig} for a depiction of a quantum communication protocol that uses the channel $n$ times. 
 
\begin{figure}
\includegraphics[scale=0.19]{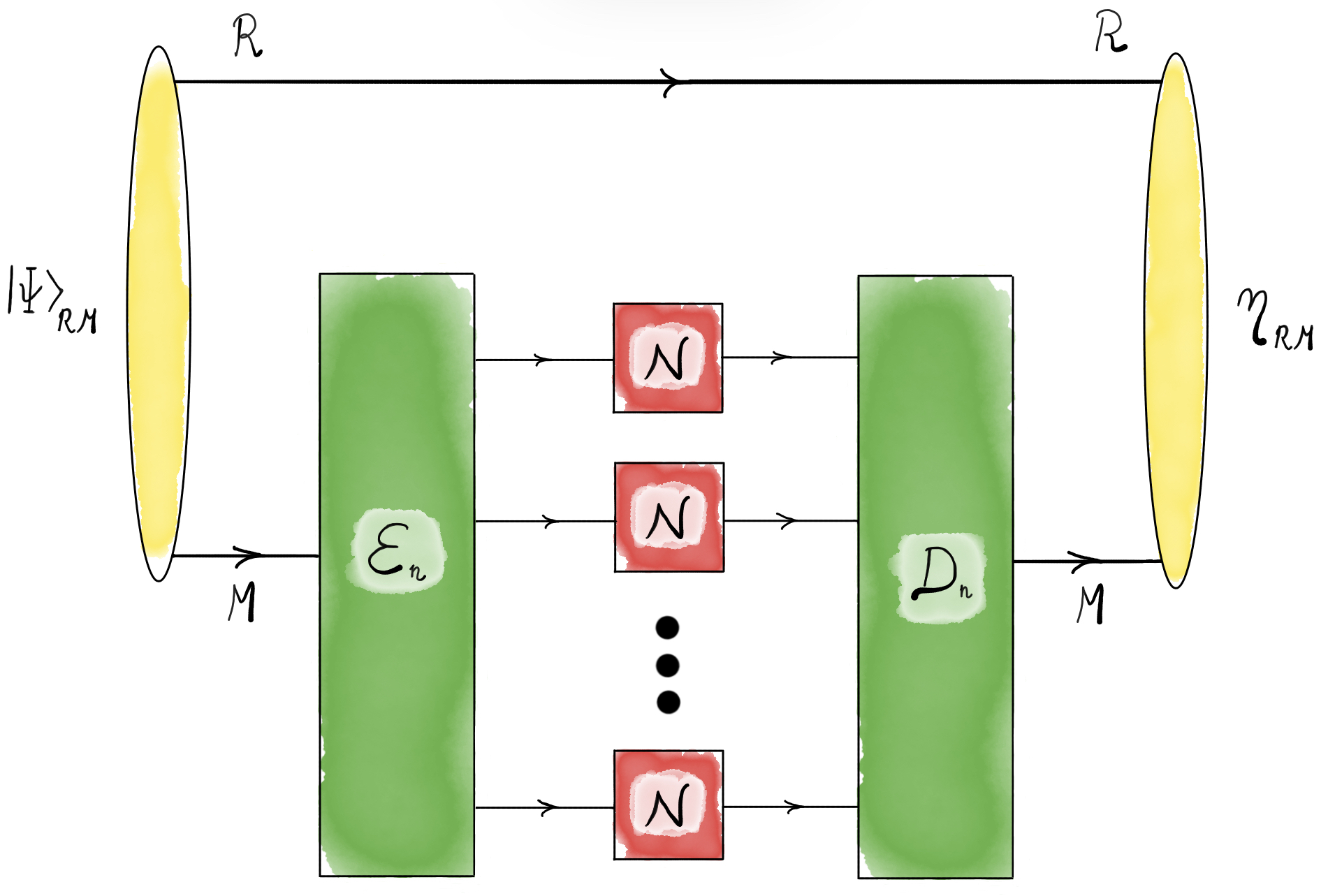}
\caption{A depiction of a quantum communication protocol that uses the channel $n$ times.}
\label{protocol_Q_fig}
\end{figure}
 
The one-shot quantum capacity $Q_{\epsilon}(\mathcal{N}_{A\to B})$ of the channel $\mathcal{N}_{A\to B}$ is defined as
\bb
Q_{\epsilon}(\mathcal{N}) \coloneqq \sup_{\mathcal{E},\mathcal{D}} \{\log_2 |M| : \exists (|M|,\epsilon) \text{ quantum communication protocol for } \mathcal{N}_{A\to B}\},
\ee
where the optimization is over every encoding channel $\mathcal{E}$ and decoding channel $\mathcal{D}$.
The (asymptotic) quantum capacity of  $\mathcal{N}_{A\to B}$ is then defined as
\bb
Q(\mathcal{N}) \coloneqq \inf_{\epsilon \in (0,1)} \liminf_{n \to \infty}\frac{1}{n}Q_{\epsilon}(\mathcal{N}^{\otimes n}),
\ee
where $\NN^{\otimes n}$ denotes $n$ copies of the channel $\NN$ used in parallel. The strong converse quantum capacity of  $\mathcal{N}_{A\to B}$ is defined as
\bb
Q^\dag(\mathcal{N}) \coloneqq \sup_{\epsilon \in (0,1)} \limsup_{n \to \infty}\frac{1}{n}Q_{\epsilon}(\mathcal{N}^{\otimes n}).
\ee
The above way of defining quantum capacity is standard, by now, in several references on quantum information theory~\cite{KW20book},~\cite[Section~VIII]{BD10}, following the same approach for defining various other capacities in classical and quantum information theory~\cite[Eqs.~(1.6)--(1.7)]{Pol10},~\cite[Section~V-A]{DMHB13},~\cite[Eq.~(1)]{TT2015},~\cite[Eq.~(10)]{Chubb2017}. There are several different ways of defining quantum capacity (see also~\cite{BS98}), but it is known that they lead to the same quantity in the asymptotic limit~\cite{temaconvariazioni}.

It is a classic result of quantum information theory that the quantum capacity is equal to the regularized coherent information of the channel~\cite{Schumacher1996,PhysRevA.54.2629,BKN98,L97,capacity2002shor,ieee2005dev}:
\bb
Q(\NN) =&\ \lim_{n\to\infty} \frac1n\, Q^{(1)}\!\left(\NN^{\otimes n}\right) = \sup_{n\in \N_+} \frac1n\, Q^{(1)}\!\left(\NN^{\otimes n}\right) , \\
Q^{(1)}\left(\NN\right) \coloneqq&\ \sup_{\ket{\Psi}_{AA'}} \Icoh(A\rangle B)_{\nu}\, ,
\label{LSD}
\ee
where
\bb
\nu_{AB} & \coloneqq \big(\operatorname{id}_A\,\otimes\, \NN_{A'\to B}\big)(\Psi_{AA'}) ,\\
\Icoh(A\rangle B)_\rho & \coloneqq S(\rho_B) - S(\rho_{AB})\, .
\label{Icoh}
\ee
This gives us a method for evaluating the quantum capacity of particular channels of interest, including the bosonic dephasing channels.

Let us now recall basic definitions related to private capacity. Let $\mathcal{U}^{\mathcal{N}}_{A\to BE}$ be an isometric channel extending the channel $\mathcal{N}_{A\to B}$~\cite{Stinespring}. An $(|M|,\epsilon)$ code for private communication over the channel $\mathcal{N}_{A\to B}$ consists of a set $\{\rho^m_A\}_m$ of encoding states and a decoder, specified as a positive operator-valued measure (POVM) $\{\Lambda^m_B\}_m$. It achieves an error $\epsilon$ if there exists a state $\sigma_E$ of the environment, such that the following inequality holds for every message $m$:
\bb
F\left( \sum_{m'} \ketbra{m'} \otimes  \Tr_B[\Lambda^{m'}_B\mathcal{U}^{\mathcal{N}}_{A\to BE}(\rho^m_A)]  , \ketbra{m} \otimes \sigma_E \right) \geq 1-\epsilon .
\ee
The one-shot private capacity $P_{\epsilon}(\mathcal{N}_{A\to B})$ of the channel $\mathcal{N}_{A\to B}$ is defined as
\bb
P_{\epsilon}(\mathcal{N}) \coloneqq \sup_{\{\rho^m_A\}_m,\{\Lambda^m_B\}_m} \{\log_2 |M| : \exists (|M|,\epsilon) \text{ private communication protocol for } \mathcal{N}_{A\to B}\},
\ee
where the optimization is over every set $\{\rho^m_A\}_m$ of encoding states and decoding POVM $\{\Lambda^m_B\}_m$.
The (asymptotic) private capacity of  $\mathcal{N}_{A\to B}$ is then defined as
\bb
P(\mathcal{N}) \coloneqq \inf_{\epsilon \in (0,1)} \liminf_{n \to \infty}\frac{1}{n}P_{\epsilon}(\mathcal{N}^{\otimes n}).
\ee
The strong converse private capacity of  $\mathcal{N}_{A\to B}$ is defined as
\bb
P^\dag(\mathcal{N}) \coloneqq \sup_{\epsilon \in (0,1)} \limsup_{n \to \infty}\frac{1}{n}P_{\epsilon}(\mathcal{N}^{\otimes n}).
\ee

The following inequalities are direct consequences of the definitions:
\bb
Q(\mathcal{N}) & \leq Q^\dag(\mathcal{N}), \\
P(\mathcal{N}) & \leq P^\dag(\mathcal{N}).
\ee
Less trivially, we also have that~\cite{ieee2005dev}
\bb
Q(\mathcal{N}) & \leq P(\mathcal{N}).
%Q^\dag(\mathcal{N}) & \leq P^\dag(\mathcal{N}) .
\ee

\subsection{Two-way assisted capacities of quantum channels}

In this section, we define the quantum and private capacities when the channel of interest is assisted by local operations and classical communication (LOCC). We begin with the LOCC-assisted quantum capacity.

\begin{figure}
\includegraphics[scale=0.22]{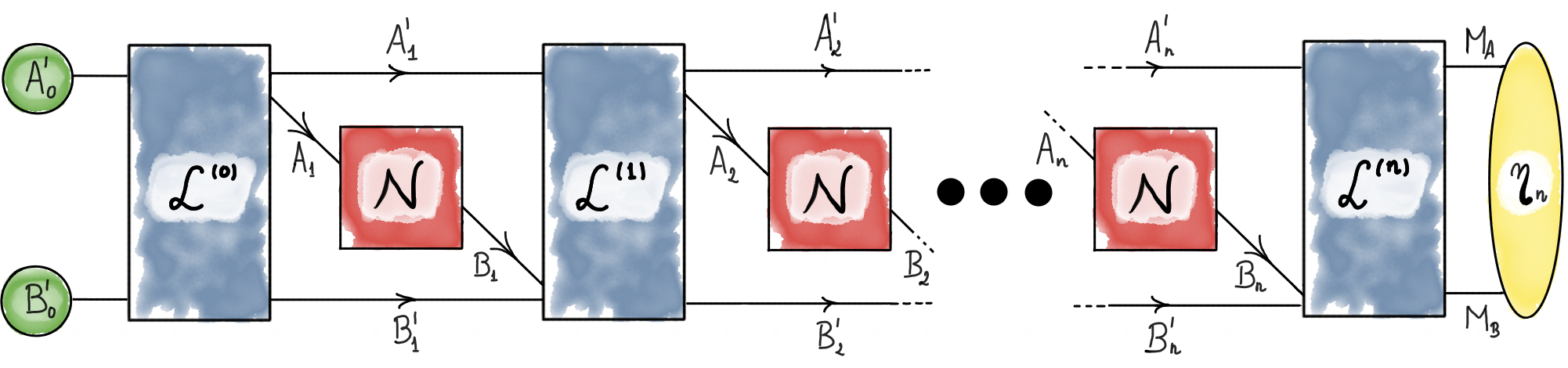}
\caption{An LOCC-assisted protocol that involves $n$ uses of the quantum channel $\NN$. Its action is described formally in Eq.~\eqref{final_state_protocol_Q2}.}
\label{protocol_Q2_fig}
\end{figure}

An $(n,|M|,\epsilon)$ protocol $\mathcal{P}$ for LOCC-assisted quantum communication consists of a separable state $\sigma_{A'_1 A_1 B'_1}$ (which is understood to be separable with respect to the bipartition $A'_1 A_1:B'_1$), the set $\big\{\mathcal{L}^{(i-1)}_{A'_{i-1}B_{i-1}B'_{i-1} \to A'_{i}A_{i}B'_{i}}\big\}_{i=2}^{n}$ of LOCC channels, and the LOCC channel $\mathcal{L}^{(n)}_{A'_{n}B_{n}B'_{n}\to M_A M_B}$. (See~\cite{LOCC} for the definition of an LOCC channel.) We can also imagine that the state $\sigma_{A'_1 A_1 B'_1}$ is produced by an LOCC preprocessing channel $\mathcal{L}^{(0)}_{A'_0B'_0\to A'_1A_1B'_1}$, with the $A'_0B'_0$ system initialised in a product state. The final state of the protocol is
\bb
\eta_{M_AM_B} \coloneqq \Big(\mathcal{L}^{(n)}_{A'_{n}B_{n}B'_{n}\to M_A M_B} \circ \mathcal{N}_{A_n \to B_n} &\circ \mathcal{L}^{(n-1)}_{A'_{n-1}B_{n-1}B'_{n-1} \to A'_{n}A_{n}B'_{n}} \circ \cdots\\
&\circ  \mathcal{L}^{(1)}_{A'_{1}B_{1}B'_{1} \to A'_{2}A_{2}B'_{2}}\circ  \mathcal{N}_{A_1 \to B_1}\Big)\big(\sigma_{A'_1 A_1 B'_1}\big)\, ,
\label{final_state_protocol_Q2}
\ee
satisfying
\bb
F(\eta_{M_A M_B}, \Phi_{M_A M_B}) \geq 1- \epsilon,
\ee
where $\Phi_{M_A M_B}$ is a maximally entangled state of Schmidt rank $|M|$. Such a protocol is depicted in Figure~\ref{protocol_Q2_fig}. We note here that it suffices for such a protocol to generate the maximally entangled state $\Phi_{M_A M_B}$ because entanglement and quantum communication are equivalent communication resources when classical communication is freely available, due to the teleportation protocol~\cite{teleportation}. 

The $n$-shot LOCC-assisted quantum capacity of the channel $\mathcal{N}_{A\to B}$ is defined as
\bb
Q_{\leftrightarrow,n,\epsilon}(\mathcal{N}) \coloneqq
\sup_{\mathcal{P}} \left\{\frac{1}{n} \log_2 |M| : \exists (n,|M|,\epsilon) \text{ LOCC-assisted q.~comm.~protocol } \mathcal{P} \text{ for } \mathcal{N}_{A\to B} \right \},
\ee
where the optimization is over every LOCC-assisted quantum communication protocol $\mathcal{P}$.
The (asymptotic) LOCC-assisted quantum capacity of $\mathcal{N}_{A\to B}$ is then defined as
\bb
Q_{\leftrightarrow}(\mathcal{N}) \coloneqq \inf_{\epsilon \in (0,1)} \liminf_{n \to \infty}Q_{\leftrightarrow,n,\epsilon}(\mathcal{N}).
\ee
The strong converse LOCC-assisted quantum capacity of $\mathcal{N}_{A\to B}$ is defined as
\bb
Q^\dag_{\leftrightarrow}(\mathcal{N}) \coloneqq \sup_{\epsilon \in (0,1)} \limsup_{n \to \infty}Q_{\leftrightarrow,n,\epsilon}(\mathcal{N}).
\ee

An $(n,|M|,\epsilon)$ protocol $\mathcal{K}$ for secret key agreement over a quantum channel is defined essentially the same as an LOCC-assisted protocol for quantum communication, except that the target final state of the protocol is more general. That is, the final step of the protocol is an LOCC channel $\mathcal{L}^{(n)}_{A'_{n}B_{n}B'_{n}\to M_A M_B S_A S_B}$, where $S_A$ and $S_B$ are extra systems of the sender Alice and the receiver Bob. Let us then denote the final state of the protocol by $\eta_{M_A M_B S_A S_B}$. Such a protocol satisfies
\bb
F(\eta_{M_A M_B S_A S_B},\gamma_{M_A M_B S_A S_B}) \geq 1-\epsilon,
\ee
where $\gamma_{M_A M_B S_A S_B} $ is a private state of dimension $|M|$~\cite{private,Horodecki2009}, having the form
\bb
\gamma_{M_A M_B S_A S_B} \coloneqq U_{M_A M_B S_A S_B} (\Phi_{M_A M_B} \otimes \theta_{S_A S_B} )U_{M_A M_B S_A S_B}^\dag.
\ee
In the above, $U_{M_A M_B S_A S_B}$ is a twisting unitary of the form
\bb
U_{M_A M_B S_A S_B} = \sum_{i,j} \ketbra{i}_{M_A} \otimes \ketbra{j}_{M_B} \otimes U^{i,j}_{S_A S_B},
\ee
with each $U^{i,j}_{S_A S_B}$ a unitary. Also, $\Phi_{M_A M_B}$ is a maximally entangled state of Schmidt rank $|M|$ and $\theta_{S_A S_B}$ is an arbitrary state. The fact that such a protocol is equivalent to the more familiar notion of secret key agreement, involving three parties generating a tripartite secret key state of the form $\frac{1}{|M|} \sum_{m=0}^{|M|-1}\ketbra{m}_{M_A} \otimes \ketbra{m}_{M_B} \otimes \sigma_E$, is the main contribution of~\cite{private,Horodecki2009} (see~\cite{KW20book} for another presentation).

The $n$-shot secret-key-agreement capacity of the channel $\mathcal{N}_{A\to B}$ is defined as
\bb
P_{\!\leftrightarrow,n,\epsilon}(\mathcal{N}) \coloneqq
\sup_{\mathcal{K}} \left\{\frac{1}{n} \log_2 |M| : \exists (n,|M|,\epsilon) \text{ secret-key-agreement protocol } \mathcal{K} \text{ for } \mathcal{N}_{A\to B} \right \},
\ee
where the optimization is over every secret key agreement protocol $\mathcal{K}$.
The (asymptotic) secret key agreement capacity of  $\mathcal{N}_{A\to B}$ is then defined as
\bb
P_{\!\leftrightarrow}(\mathcal{N}) \coloneqq \inf_{\epsilon \in (0,1)} \liminf_{n \to \infty}P_{\!\leftrightarrow,n,\epsilon}(\mathcal{N}).
\ee
The strong converse secret key agreement capacity of $\mathcal{N}_{A\to B}$ is defined as
\bb
P^\dag_{\!\leftrightarrow}(\mathcal{N}) \coloneqq \sup_{\epsilon \in (0,1)} \limsup_{n \to \infty}P_{\!\leftrightarrow,n,\epsilon}(\mathcal{N}).
\label{strong_converse_secret_key_agreement_capacity}
\ee

The following inequalities are direct consequences of the definitions:
\bb
Q_{\leftrightarrow}(\mathcal{N}) & \leq Q_{\leftrightarrow}^\dag(\mathcal{N}) \\
P_{\!\leftrightarrow}(\mathcal{N}) & \leq P_{\!\leftrightarrow}^\dag(\mathcal{N}).
\ee
Due to the fact that a more general target state is allowed in secret key agreement, the following inequalities hold
\bb
Q_{\leftrightarrow}(\mathcal{N}) & \leq P_{\!\leftrightarrow}(\mathcal{N})\\
Q^{\dag}_{\leftrightarrow}(\mathcal{N}) & \leq P^{\dag}_{\leftrightarrow}(\mathcal{N}).
\ee

Finally, due to the fact that classical communication can only enhance capacities, the following inequalities hold:
\bb
Q(\mathcal{N}) & \leq Q_{\leftrightarrow}(\mathcal{N})\\
Q^{\dag}(\mathcal{N}) & \leq Q^{\dag}_{\leftrightarrow}(\mathcal{N})\\
P(\mathcal{N}) & \leq P_{\!\leftrightarrow}(\mathcal{N})\\
P^{\dag}(\mathcal{N}) & \leq P^{\dag}_{\leftrightarrow}(\mathcal{N}).
\ee
Thus, to establish the collapse of all of the capacities discussed in this section and the previous one, for the case of bosonic dephasing channels, it suffices to prove the lower bound on $Q(\mathcal{N})$ and the upper bound on $P^{\dag}_{\leftrightarrow}(\mathcal{N})$.

\subsection{Teleportation simulation}

The $d$-dimensional quantum teleportation protocol~\cite{teleportation} takes as input a $d$-dimensional quantum state $\rho_{A'}$ of a system $A'$, a $d$-dimensional maximally entangled state
\bb
\Phi_d^{AB} \coloneqq \ketbra{\Phi_d}_{AB}\, ,\qquad \ket{\Phi_d}_{AB} \coloneqq \frac{1}{\sqrt{d}} \sum_{k=0}^{d-1} \ket{k}_A\ket{k}_B\, ,
\label{Phi_d}
\ee
and by using only local operations and one-way classical communication from Alice to Bob reproduces the exact same state $\rho$ on the system $B$. To define it rigorously, for $x,z\in \{0,\ldots,d-1\}$ let us introduce the unitary matrices
\bb
X(x)\,\coloneqq\, \sum_{k=0}^{d-1} \ket{k \oplus x}\!\!\bra{k}\, ,\qquad Z(z)\, \coloneqq\, \sum_{k=0}^{d-1} e^{\frac{2\pi i}{d} zk} \ketbra{k}\, ,\qquad U(x,z)\coloneqq X(x) Z(z)\, ,
\label{HW}
\ee
where $\oplus$ denotes sum modulo $d$. Then the teleportation channel $\T^{(d)}_{A'AB\to B}$ is given by
\bb
\T^{(d)}_{A'AB\to B}(X_{A'AB}) \coloneqq \sum_{x,z=0}^{d-1} U(x,z)_{B} \Tr_{AA'}\!\left[ X_{A'AB}\, U(x,z)_{A'} \Phi_d^{AA'} U(x,z)_{A'}^\dag \right] U(x,z)_{B}^{\dag}\, .
\label{telep}
\ee
The effectiveness of the standard quantum teleportation protocol is expressed by the identity
\bb
\T^{(d)}_{A'AB\to B}\!\left(\rho_{A'} \otimes \Phi_d^{AB}\right)=\rho_{B}\, ,
\ee
meaning that the same operator $\rho$ is written in the registers $A'$ and $B$ on the left-hand and on the right-hand side, respectively.

Some channels can be simulated by the action of the standard teleportation protocol on their Choi states~\cite{BDSW96}, in the sense that
\bb
\T^{(d)}_{A'AB\to B}\!\left(\rho_{A'} \otimes \Phi_{\mathcal{N}}^{AB}\right)=\mathcal{N}(\rho_{A'})\, ,
\ee
where $\Phi_{\mathcal{N}}^{AB}$ is the Choi state of the channel $\mathcal{N}$.
For example, this is the case for all Pauli channels. More generally, other channels can be simulated approximately by the action of the teleportation protocol on their Choi states. This concept was introduced in~\cite{BDSW96} for the explicit purpose of obtaining upper bounds on the LOCC-assisted quantum capacity of a channel in terms of an entanglement measure evaluated on the Choi state. The idea was rediscovered in~\cite{Mul12} for the same purpose, and more recently the same idea was used to bound the secret-key-agreement capacity~\cite{PLOB} and the strong converse secret-key-agreement capacity~\cite{WTB16}. Here we make use of this concept in order to obtain upper bounds on the strong converse secret key agreement capacity of all bosonic dephasing channels. As discussed earlier, it suffices to consider establishing an upper bound on this latter capacity because it is the largest among all the capacities that we consider in this paper.

\subsection{Bosonic dephasing channel}

\begin{Def}
Let $p$ be a probability density function on the interval $[-\pi,\pi]$. The associated \deff{bosonic dephasing channel} is the quantum channel $\NN_p:\T(\HH_1)\to \T(\HH_1)$ acting on a single-mode system and given by
\bb
\NN_p (\rho) \coloneqq \int_{-\pi}^{\pi} d\phi\ p(\phi)\, e^{-i \n\, \phi} \rho\, e^{i \n\, \phi}\, ,
\label{Np}
\ee
where $a^\dag a$ is the photon number operator.
\end{Def}

The action of the bosonic dephasing channel can be easily described by representing the input operator in the Fock basis. By means of this representation the Hilbert space of a single-mode system, $\HH_1$, becomes equivalent to that of square-summable complex-valued sequences, denoted $\ell^2(\N)$. Operators on $\HH_1$ are represented by \deff{infinite matrices}, i.e., operators $S:\ell^2(\N)\to \ell^2(\N)$. Given two such operators $S,T$, which we formally write $S = \sum_{h,k} S_{hk} \ketbraa{h}{k}$ and $T = \sum_{h,k} T_{hk} \ketbraa{h}{k}$, their \deff{Hadamard product} is defined by
\bb
S\circ T \coloneqq \sum_{h,k} S_{hk} T_{hk} \ketbraa{h}{k}\, .
\label{Hadamard_product}
\ee
One of the fundamental facts concerning the Hadamard product is the \emph{Schur product theorem}~\cite[Theorem~7.5.3]{HJ1}: it states that if $S\geq 0$ and $T\geq 0$ are positive semi-definite, then also $S\circ T\geq 0$ is such. The theorem is usually stated for matrices, but is is immediately generalisable to the operator case as a consequence of the remark below.

\begin{rem} \label{positive_semidefiniteness_truncation_rem}
Let $T: \ell^2(\N)\to \ell^2(\N)$ be an infinite matrix. Then $T\geq 0$ if and only if $T^{(d)}\geq 0$ for all $d\in \N_+$, where $T^{(d)}$ is the $d\times d$ top left corner of $T$. This follows from the fact that the linear span of the basis vectors $\ket{k}$, $k\in\N$, is dense in $\ell^2(\N)$.
\end{rem}

Given an infinite matrix $T$ which represents a bounded operator $T:\ell^2(\N) \to \ell^2(\N)$, we can define the associated \deff{Hadamard channel} as
\bb
\begin{array}{ccccc} \LL_T & : & \T\left(\ell^2(\N)\right) & \longrightarrow & \T\left(\ell^2(\N)\right) \\[1ex]
&& S & \longmapsto & S\circ T\, . \end{array}
\label{Hadamard_channel}
\ee
When restricted to finite-dimensional systems, Hadamard channels are examples of so-called `partially coherent direct sum channels'~\cite{Chessa2021}. The following is then easily established.

\begin{lemma}
Let $T:\ell^2(\N) \to \ell^2(\N)$ be a bounded operator represented by an infinite matrix. Then the Hadamard channel $\LL_T$ defined by~\eqref{Hadamard_channel} is a completely positive and trace preserving map, i.e., a quantum channel, if and only if
\begin{enumerate}[(i)]
\item $T\geq 0$ as an operator; and
\item $T_{kk} = 1$ for all $k\in \N$.
\end{enumerate}
\end{lemma}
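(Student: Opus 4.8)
The plan is to prove the two directions of the stated equivalence by reducing everything to the finite-dimensional truncations $T^{(d)}$ and invoking the Schur/Stinespring-type characterizations of Choi operators together with Remark~\ref{positive_semidefiniteness_truncation_rem}. First I would observe that $\LL_T$ is trace preserving if and only if it preserves the trace on rank-one diagonal inputs $\ketbra{k}$, which immediately forces $(S\circ T)_{kk} = S_{kk} T_{kk}$ to integrate to $\Tr S$ for all $S$, i.e.\ $T_{kk}=1$ for all $k\in\N$; conversely if $T_{kk}=1$ for all $k$ then $\Tr(S\circ T) = \sum_k S_{kk} T_{kk} = \Tr S$. This handles condition (ii) cleanly and does not interact with positivity.

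For complete positivity, the key is to relate the Choi operator of $\LL_T$ to $T$ itself. On the finite-dimensional corner, the Choi operator of the map $S\mapsto S\circ T^{(d)}$ is, up to normalization, $(\id\otimes \LL_{T^{(d)}})(\ketbra{\Phi_d}) = \frac1d \sum_{n,m} T^{(d)}_{nm}\,\ketbraa{n}{m}\otimes\ketbraa{n}{m}$, which is unitarily equivalent (via the isometry $\ket{n}\mapsto\ket{n}\ket{n}$, exactly as used in the Methods section around~\eqref{eq:choi-state-bdc}) to $\frac1d T^{(d)}$ supported on the diagonal subspace. Hence $\LL_{T^{(d)}}$ is completely positive if and only if $T^{(d)}\geq 0$. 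Passing to the infinite-dimensional case, I would note that $\LL_T$ is completely positive if and only if each restriction to the $d$-dimensional corner is completely positive (since complete positivity is tested on finite-rank inputs and the span of the $\ket{k}$ is dense, which is the content of Remark~\ref{positive_semidefiniteness_truncation_rem} applied to the Choi operator), which by the previous sentence is equivalent to $T^{(d)}\geq 0$ for all $d$, which by Remark~\ref{positive_semidefiniteness_truncation_rem} again is equivalent to $T\geq 0$ as an operator. The ``if'' direction can alternatively be seen directly from the Schur product theorem: if $T\geq 0$ and $\omega\geq 0$ then $\omega\circ(T\otimes\ldots)$ stays positive, so $\LL_T$ and all its ampliations are positive.

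The main obstacle I anticipate is the care needed in the infinite-dimensional passage — specifically, justifying that $\LL_T$ maps $\T(\ell^2(\N))$ into itself (boundedness of the Hadamard multiplier as a map on trace class) and that complete positivity can indeed be checked on the finite corners. For the former, one uses that $T$ is a bounded operator so that the Schur multiplication map $S\mapsto S\circ T$ is bounded on $\B(\ell^2)$ (its Schur multiplier norm is at most $\|T\|_\infty$ when $T_{kk}$ are bounded, or more carefully one uses that $T$ with unit diagonal and $T\geq 0$ has a Stinespring-type factorization $T_{nm}=\braket{v_n|v_m}$ with unit vectors $v_n$, giving $S\circ T = \sum_x A_x S A_x^\dag$ for contractions $A_x$), and then predual duality gives the trace-class statement. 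For the latter, the definition of complete positivity in the excerpt tests $\id_N\otimes\LL_T$ on $\T(\C^N\otimes\ell^2(\N))$, and positivity of a trace-class operator is equivalent to positivity of all its compressions to $\C^N\otimes\C^d$; combined with the computation of the Choi operator above, this closes the argument. I would present the finite-dimensional Choi computation as the heart of the proof and relegate the density/boundedness bookkeeping to a remark, since it is routine given Remark~\ref{positive_semidefiniteness_truncation_rem}.
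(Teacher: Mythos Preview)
Your argument is correct. The trace-preservation part matches the paper's proof exactly. For complete positivity, however, you take a different route from the paper.

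Your approach is Choi-based: on each finite corner you identify the Choi state of $\LL_{T^{(d)}}$ with $T^{(d)}/d$ via the isometry $\ket{n}\mapsto\ket{n}\ket{n}$, invoke the finite-dimensional Choi--Jamio{\l}kowski theorem, and then pass to the full space by checking compressions and invoking Remark~\ref{positive_semidefiniteness_truncation_rem}. The paper instead argues both directions by a single explicit identity: for sufficiency it writes $(\id\otimes\LL_T)(X) = d\,(\ketbra{+}\otimes T)\circ X$ and appeals directly to the Schur product theorem; for necessity it rewrites $\braket{\psi|T^{(d)}|\psi} = d\,\braket{+|\LL_T(\ketbra{\psi})|+}$, which shows that if $T\ngeq 0$ then $\LL_T$ fails even ordinary positivity, not just complete positivity. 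So the paper's necessity argument is shorter and yields a slightly stronger conclusion, while your Choi route is more systematic and makes the isomorphism $T\leftrightarrow\text{Choi}(\LL_T)$ transparent (which the paper exploits elsewhere anyway, around~\eqref{eq:choi-state-bdc}). Your discussion of the trace-class boundedness of the Schur multiplier is a legitimate technical point that the paper leaves implicit; your proposed handling via a factorization $T_{nm}=\braket{v_n|v_m}$ or via duality is the standard fix and is fine to relegate to a remark as you suggest.
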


\begin{proof}
The two conditions are clearly necessary. In fact, if $T_{kk}\neq 1$ for some $k\in \N_+$, then $\Tr[T\circ \ketbra{k}] = T_{kk} \neq 1 = \Tr \ketbra{k}$; i.e., $\LL_T$ is not trace preserving. Also, if $T \ngeq 0$ then by Remark~\ref{positive_semidefiniteness_truncation_rem} there exists $d\in \N_+$ and some $\ket{\psi}\in \C^d$ such that $\braket{\psi|T^{(d)}|\psi} < 0$. Rewriting $\braket{\psi|T^{(d)}|\psi} = \sum_{h,k=0}^{d-1} \psi_h^* \psi_k T_{hk} = d \braket{+| (T\circ \psi)|+}$, where $\psi\coloneqq \ketbra{\psi}$ and $\ket{+}\coloneqq \frac{1}{\sqrt{d}} \sum_{k=0}^{d-1} \ket{k}$, shows that  in this case $\LL_T$ would not even be positive, let alone completely positive.

Vice versa, conditions~(i)--(ii) are sufficient. In fact, on the one hand, by~(ii), for an arbitrary $X$, we have that $\Tr \LL_T(X) = \sum_k T_{kk} X_{kk} = \Tr X$, i.e., $\LL_T$ is trace preserving. On the other hand, if $T\geq 0$ then for all $d\in \N_+$ and for all positive semi-definite bipartite operators $X \geq 0$ acting on $\C^d \otimes \ell^2(\N)$ we have that $\left(I \otimes \LL_T\right) (X) = d \left(\ketbra{+} \otimes T\right) \circ X \geq 0$, where $\ket{+}$ is defined above, and the last inequality follows by the Schur product theorem. Since $d$ is arbitrary, this proves that $\LL_T$ is completely positive.
\end{proof}

The theory of Hadamard channels we just sketched out is relevant here due to the following simple observation.

\begin{lemma}
When both the input and the output density operators are represented in the Fock basis, the bosonic dephasing channel $\NN_p$ acts as the Hadamard channel
\begin{align}
\NN_p(\rho) =&\ \rho\circ T_p\, , \label{Np_action_Tp} \\
(T_p)_{hk} \coloneqq&\ \int_{-\pi}^{\pi} d\phi\ p(\phi)\, e^{-i\phi (h-k)} \, . \label{Tp}
\end{align}
\end{lemma}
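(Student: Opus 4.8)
The plan is to compute the action of $\NN_p$ directly in the Fock basis and match it against the definition of the Hadamard channel. First I would write an arbitrary input density operator as $\rho = \sum_{h,k} \rho_{hk} \ketbraa{h}{k}$, using the fact established earlier that $\HH_1 \cong \ell^2(\N)$ under the Fock representation. Since the phase-shift unitary $e^{-i \n \phi}$ is diagonal in the Fock basis — this is precisely~\eqref{Fock_eigenvectors}, which gives $e^{-i\n\phi}\ket{k} = e^{-ik\phi}\ket{k}$ — I would conjugate termwise to obtain $e^{-i\n\phi}\rho\, e^{i\n\phi} = \sum_{h,k} \rho_{hk}\, e^{-i\phi(h-k)} \ketbraa{h}{k}$.

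Next I would substitute this into the integral defining $\NN_p$ in~\eqref{Np} and interchange the sum and the integral:
\bb
\NN_p(\rho) = \int_{-\pi}^{\pi} d\phi\ p(\phi) \sum_{h,k} \rho_{hk}\, e^{-i\phi(h-k)} \ketbraa{h}{k} = \sum_{h,k} \rho_{hk}\left( \int_{-\pi}^{\pi} d\phi\ p(\phi)\, e^{-i\phi(h-k)} \right) \ketbraa{h}{k}\, .
\ee
Recognizing the bracketed quantity as $(T_p)_{hk}$ from~\eqref{Tp}, and comparing with the definition~\eqref{Hadamard_product} of the Hadamard product, this is exactly $\rho \circ T_p$, which proves~\eqref{Np_action_Tp}.

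The only genuinely delicate point is justifying the interchange of the (infinite) sum over $h,k$ with the integral over $\phi$, and confirming that the resulting series converges to a trace-class operator so that the identity holds in $\T(\HH_1)$ rather than merely formally. I would handle this by noting that for $\rho \in \T_+(\HH_1)$ the diagonal entries satisfy $\sum_k \rho_{kk} = 1$ and off-diagonal entries obey $|\rho_{hk}| \leq \sqrt{\rho_{hh}\rho_{kk}}$, while $|(T_p)_{hk}| \leq \int p(\phi)\,d\phi = 1$, so partial sums are dominated and Fubini applies; alternatively, one observes that $\rho \mapsto e^{-i\n\phi}\rho\, e^{i\n\phi}$ is an isometry on $\T(\HH_1)$, continuous in $\phi$, so the Bochner integral defining $\NN_p(\rho)$ exists and its Fock-basis matrix entries are obtained by integrating entrywise — which is the computation above. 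Extending from positive to general trace-class $\rho$ by linearity then completes the argument. I expect this measure-theoretic bookkeeping to be the main (though not deep) obstacle; the algebraic heart of the statement is immediate from the diagonality of the phase shift.
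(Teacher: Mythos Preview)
Your proposal is correct and follows exactly the same route as the paper: write $\rho$ in the Fock basis, use~\eqref{Fock_eigenvectors} to diagonalize the phase shift, and read off the Hadamard-product structure. The paper's proof is in fact a single line omitting the Fubini/Bochner justification you sketch, so your version is strictly more detailed but not different in substance.
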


\begin{proof}
Due to~\eqref{Fock_eigenvectors}, we have that $\NN_p(\rho) = \sum_{h,k} \rho_{hk} \int_{-\pi}^{\pi}d\phi\ p(\phi)\, e^{-i\phi (h-k)} \ketbraa{h}{k} = \rho\circ T_p$.
\end{proof}

\section{Capacities of bosonic dephasing channels}

\subsection{Infinite Toeplitz matrices and theorems of Szeg\H{o} and Avram--Parter type}

Observe that the expression for $(T_p)_{hk}$ only depends on the difference $h-k$. Matrices with this property are named after the mathematician Otto Toeplitz. Formally, a \deff{Toeplitz matrix} of size $d\in \N_+$ is a matrix of the form
\bb
T = \begin{pmatrix} a_0 & a_{-1} & a_{-2} & \ldots & a_{-d+1} \\
a_1 & a_0 & a_{-1} & \ldots & a_{-d+2} \\
a_2 & a_1 & a_0 & \ldots & a_{-d+3} \\
\vdots & \vdots & \vdots & \ddots & \vdots \\
a_{d-1} & a_{d-2} & a_{d-3} & \ldots & a_0 \end{pmatrix} ,
\ee
where $a_0,\ldots, a_{d-1}\in \C$. Alternatively, it can be defined to have entries
\bb
T_{hk} = a_{h-k}\, .
\label{Toeplitz_entries}
\ee
This definition can be formally extended to the case of \deff{infinite Toeplitz matrices}, simply by letting $h,k\in \N$ run over all non-negative integers. Note that the top left corners $T^{(d)}\coloneqq \sum_{h,k=0}^{d-1} T_{hk} \ketbraa{h}{k}$ of an infinite Toeplitz matrix are Toeplitz matrices themselves.

In applications one often encounters the case in which the numbers $a_k$ are the Fourier coefficients of an absolutely integrable function $a:[-\pi,\pi]\to \C$, i.e.
\bb
a_k = \int_{-\pi}^{+\pi} \frac{d\phi}{2\pi}\ a(\phi)\, e^{-ik\phi} .
\label{a_k_Fourier}
\ee
In this paper, we will consider mainly non-negative functions $a:[-\pi,\pi]\to \R_+$.

A result due to Szeg\H{o}~\cite{Szego1920, GRENADER} states that the spectrum of the $d\times d$ top left corners $T^{(d)}$ of an infinite Toeplitz matrix converges to the generating function $a:[-\pi,\pi]\to \R$ (for now assumed to be real-valued), in the sense that
\bb
\lim_{d\to\infty} \frac1d \Tr F\big( T^{(d)}\big) = \lim_{d\to\infty} \frac1d \sum_{j=1}^d F\big(\lambda_j\big(T^{(d)}\big)\big) = \int_{-\pi}^{\pi} \frac{d\phi}{2\pi}\ F\big(a(\phi)\big)
\label{Szego}
\ee
\emph{whenever $a$ and $F:\R\to \R$ are sufficiently well behaved.} Here, $\lambda_j\big(T^{(d)}\big)$ denotes the $j^\text{th}$ eigenvalue of the matrix $T^{(d)}$. The scope and extension of Szeg\H{o}'s result has been expanded over the years by relaxing the conditions to be imposed on $a$ and $F$ so that~\eqref{Szego} holds. At the same time, an analogous class of results, initially conceived by Parter~\cite{Parter1986} and Avram~\cite{Avram1988}, has been developed to deal with the case of complex-valued generating functions $a:[-\pi,\pi]\to \C$. Results of the Avram--Parter type generalize~\eqref{Szego} by stating that
\bb
\lim_{d\to\infty} \frac1d \Tr F\Big( \big|T^{(d)}\big|\Big) = \lim_{d\to\infty} \frac1d \sum_{j=1}^d F\big(s_j \big(T^{(d)}\big)\big) = \int_{-\pi}^{\pi} \frac{d\phi}{2\pi}\ F\Big(\big| a(\phi)\big|\Big)\, ,
\label{Avram-Parter}
\ee
where $s_j\big(T^{(d)}\big)$ is now the $j^\text{th}$ \emph{singular value} of the matrix $T^{(d)}$. Both Szeg\H{o}'s and Avram--Parter's result have been generalized in successive steps, by Zamarashkin and Tyrtyshnikov~\cite{Zamarashkin1997}, Tilli~\cite{Tilli1998}, Serra-Capizzano~\cite{SerraCapizzano2002}, B\"{o}ttcher, Grudsky, and Maksimenko~\cite{Boettcher2008}, and others. For a detailed account of these developments, we refer the reader to the textbooks~\cite{GRUDSKY, BOETTCHER, GARONI} and especially to the lecture notes by Grudsky~\cite{Grudsky-lectures}. Here we will just need the following lemma, extracted from the work of Serra-Capizzano.

\begin{lemma}[(Serra-Capizzano~\cite{SerraCapizzano2002})] \label{Serra-Capizzano_lemma}
If $a:[-\pi,\pi]\to \R_+$ is such that
\bb
\int_{-\pi}^{+\pi} \frac{d\phi}{2\pi}\ a(\phi)^\alpha < \infty
\ee
for some $\alpha \geq 1$, and moreover $F:\R_+ \to \R$ is continuous and satisfies
\bb
F(x) = O(x^\alpha) \qquad (x\to \infty)\, ,
\ee
then~\eqref{Szego} holds.
\end{lemma}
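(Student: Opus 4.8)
The plan is to prove~\eqref{Szego} first for \emph{bounded} continuous $F$, and then bootstrap to the stated polynomial growth $F(x)=O(x^\alpha)$, using throughout the truncated symbol $a_N\coloneqq\min\{a,N\}$, whose $d\times d$ Toeplitz section $T^{(d)}_N$ is Hermitian with spectrum contained in $[0,N]$. Three remarks organise everything. First, $\big([-\pi,\pi],\tfrac{d\phi}{2\pi}\big)$ is a probability space, and since $\alpha\ge 1$ the hypothesis $\int a^\alpha<\infty$ forces $a\in L^1$. Second, $a\ge a_N\ge 0$ implies $T^{(d)}\succeq T^{(d)}_N\succeq 0$ by the Schur-type reasoning of Remark~\ref{positive_semidefiniteness_truncation_rem}. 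Third, $T^{(d)}-T^{(d)}_N$ is the Toeplitz section of the non-negative symbol $(a-N)_+$, hence positive semi-definite with $\tfrac1d\Tr\big(T^{(d)}-T^{(d)}_N\big)=\tfrac1{2\pi}\int_{-\pi}^{\pi}(a-N)_+\,d\phi\eqqcolon\rho_N$, a quantity \emph{independent of $d$} that tends to $0$ as $N\to\infty$ by dominated convergence. With these in hand, the proof has two independent ingredients — a weak-convergence statement for bounded $F$ and a sharp uniform moment bound — which are then assembled.

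\emph{Ingredient 1 (bounded $F$).} For the bounded real symbol $a_N$ the classical Szeg\H{o} theorem~\cite{Szego1920,GRENADER} gives $\tfrac1d\Tr F\big(T^{(d)}_N\big)\to\tfrac1{2\pi}\int F(a_N)\,d\phi$ for every continuous $F$ (this is elementary for $F$ a polynomial, via a Ces\`aro evaluation of $\tfrac1d\Tr\big(T^{(d)}_N\big)^k$, and then follows by uniform approximation on $[0,N]$). To transfer this to $T^{(d)}$, I use that $X\mapsto\lambda^\downarrow(X)$ is $1$-Lipschitz from the Hermitian matrices with trace norm into $\ell^1$ (a consequence of Lidskii's theorem), so that for $F$ Lipschitz $\tfrac1d\big|\Tr F(T^{(d)})-\Tr F(T^{(d)}_N)\big|\le \|F\|_{\mathrm{Lip}}\,\tfrac1d\big\|T^{(d)}-T^{(d)}_N\big\|_1=\|F\|_{\mathrm{Lip}}\,\rho_N$. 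Approximating any $F\in C_c([0,\infty))$ uniformly by Lipschitz functions, and then any bounded continuous $F$ by compactly supported ones after discarding the high part of the spectrum via the Markov bound $\tfrac1d\#\{l:\lambda^\downarrow_l(T^{(d)})>M\}\le \tfrac1M\,\tfrac1{2\pi}\int a\,d\phi$, an $\varepsilon/3$ argument yields $\tfrac1d\Tr F(T^{(d)})\to\tfrac1{2\pi}\int F(a)\,d\phi$ for all bounded continuous $F$; equivalently, the spectral measures $\mu_d\coloneqq\tfrac1d\sum_l\delta_{\lambda_l(T^{(d)})}$ converge weakly to the law $\nu$ of $a$ under $\tfrac{d\phi}{2\pi}$.

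\emph{Ingredient 2 (uniform $\alpha$-moment bound).} I claim $\tfrac1d\Tr\big(T^{(d)}\big)^\alpha\le\tfrac1{2\pi}\int a^\alpha\,d\phi$ for \emph{every} $d$. By Ky Fan's variational formula, $\sum_{l=1}^{j}\lambda^\downarrow_l\big(T^{(d)}\big)=\sup\big\{\tfrac1{2\pi}\int a\,\rho\,d\phi\big\}$ over $\rho=\sum_{l=1}^j|Q_l|^2$ with $Q_1,\dots,Q_j$ orthonormal in $\mathcal P_d\coloneqq\Span\{e^{ik\phi}\}_{k=0}^{d-1}\subset L^2\big(\tfrac{d\phi}{2\pi}\big)$; any such $\rho$ satisfies $\tfrac1{2\pi}\int\rho\,d\phi=j$ and $0\le\rho\le d$, the latter by Bessel's inequality applied to the evaluation functional $Q\mapsto Q(\phi)$ on $\mathcal P_d$, whose norm is $\sqrt d$. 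The bathtub principle then gives $\sum_{l=1}^j\lambda^\downarrow_l\big(T^{(d)}\big)\le d\int_0^{j/d}a^{\sharp}(t)\,dt$ for all $j$, where $a^{\sharp}$ is the decreasing rearrangement of $a$; that is, $\big(\lambda^\downarrow_l(T^{(d)})\big)_l$ is weakly majorised by the block averages $y^{(d)}_l\coloneqq d\int_{(l-1)/d}^{l/d}a^{\sharp}$. Karamata's inequality for the convex increasing map $x\mapsto x^\alpha$, followed by Jensen's inequality on each block, yields $\tfrac1d\sum_l\lambda^\downarrow_l(T^{(d)})^\alpha\le\tfrac1d\sum_l\big(y^{(d)}_l\big)^\alpha\le\int_0^1\!(a^{\sharp})^\alpha=\tfrac1{2\pi}\int_{-\pi}^{\pi}a^\alpha\,d\phi$, as claimed.

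\emph{Assembly and the main obstacle.} Since $F$ is continuous with $F(x)=O(x^\alpha)$, we have $|F(x)|\le C(1+x^\alpha)$ globally. Fix a continuous cutoff with $\mathbf 1_{[0,M]}\le\chi_M\le\mathbf 1_{[0,M+1]}$ and write $F=F\chi_M+F(1-\chi_M)$. The bounded continuous part $F\chi_M$ is controlled by Ingredient~1, while $\tfrac1d\big|\Tr\big((F(1-\chi_M))(T^{(d)})\big)\big|\le 2C\int_{[M,\infty)}x^\alpha\,d\mu_d=2C\big(\int x^\alpha\,d\mu_d-\int_{[0,M)}x^\alpha\,d\mu_d\big)$, whose $\limsup_{d\to\infty}$ is at most $2C\big(\tfrac1{2\pi}\int a^\alpha\,d\phi-\int_{[0,M)}x^\alpha\,d\nu\big)$ by Ingredient~2 together with the Portmanteau theorem applied to the lower semicontinuous function $x^\alpha\mathbf 1_{[0,M)}$; this tends to $0$ as $M\to\infty$, as does the analogous quantity on the $\nu$ side. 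Letting $d\to\infty$ and then $M\to\infty$ gives~\eqref{Szego} for the given $F$. I expect the delicate point to be exactly this borderline growth: $F$ is permitted to grow as fast as $x^\alpha$, which is precisely the integrability threshold for $a$, leaving no room for a crude tail estimate. What makes the assembly work is the \emph{sharp} constant $1$ in Ingredient~2; a naive route through Schatten--von Neumann interpolation of the endpoint bounds $\|T^{(d)}\|_{2}\le\sqrt d\,\|a\|_{L^2}$ and $\|T^{(d)}\|_{\infty}\le\|a\|_{L^\infty}$ only reaches $\alpha\ge 2$ and yields a strictly larger constant for $\alpha\in(1,2)$, which would be insufficient — the Ky Fan/reproducing-kernel/bathtub argument is what delivers the optimal bound for all $\alpha\ge 1$.
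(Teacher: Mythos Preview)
Your argument is essentially correct and self-contained, but it is a \emph{completely different} route from the paper's. The paper does not prove Lemma~\ref{Serra-Capizzano_lemma} from scratch at all: it invokes Serra-Capizzano's published result~\cite[Theorem~2]{SerraCapizzano2002}, which establishes the Avram--Parter identity~\eqref{Avram-Parter} for singular values under the stated hypotheses, and then observes that because $a\ge 0$ the truncations $T^{(d)}$ are positive semi-definite (as one checks by writing $\sum_{h,k}\psi_h^*\psi_k\,a_{h-k}=\int a(\phi)\,|\sum_k\psi_k e^{ik\phi}|^2\,\tfrac{d\phi}{2\pi}\ge 0$), whence singular values and eigenvalues coincide and~\eqref{Szego} follows. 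In other words, the paper's proof is two lines of reduction plus a citation; your proof supplies the entire machinery, via (i)~weak convergence of the empirical spectral measures through an $L^1$ truncation and Lidskii perturbation, and (ii)~the sharp uniform bound $\tfrac1d\Tr(T^{(d)})^\alpha\le\tfrac1{2\pi}\int a^\alpha$ obtained by your Ky~Fan/reproducing-kernel/bathtub/Karamata chain. Your approach has the merit of being self-contained and of isolating exactly why the borderline growth $F(x)=O(x^\alpha)$ is admissible; the paper's approach has the merit of brevity, since Serra-Capizzano's theorem is being used as a black box anyway.

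One small slip in the assembly step: the function $x^\alpha\mathbf 1_{[0,M)}$ is \emph{not} lower semicontinuous at $x=M$ (it drops from $M^\alpha$ to $0$), so the Portmanteau theorem does not apply to it directly. This is easily repaired --- replace it by the bounded \emph{continuous} function $x^\alpha\chi_{M-1}(x)$ (with your cutoff $\chi$), use ordinary weak convergence, and let $M\to\infty$ via monotone convergence --- but as written the invocation is not quite right.
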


\begin{proof}
The original result by Serra-Capizzano~\cite[Theorem~2]{SerraCapizzano2002} states that the identity~\eqref{Avram-Parter} involving singular values holds. However, under the stronger hypotheses that we are making here it can be seen that~\eqref{Szego} and~\eqref{Avram-Parter} are actually equivalent. Since $a$ takes on values in $\R_+$, we only need to check that for each $d$ the singular values and the eigenvalues of $T^{(d)}$ coincide. To this end, it suffices to note that $T^{(d)}$ is a positive semi-definite operator, simply because
\bb
\sum_{h,k=0}^{d-1} \psi_h^* \psi_k T_{hk} &= \sum_{h,k=0}^{d-1} \psi_h^* \psi_k \int_{-\pi}^{+\pi} \frac{d\phi}{2\pi}\ a(\phi)\, e^{-i(h-k)\phi} \\
&= \int_{-\pi}^{+\pi} \frac{d\phi}{2\pi}\ a(\phi) \sum_{h,k=0}^{d-1} \psi_h^* \psi_k\, e^{-i(h-k)\phi} \\
&= \int_{-\pi}^{+\pi} \frac{d\phi}{2\pi}\ a(\phi) \left|\sumno_{k=0}^{d-1} \psi_k\, e^{ik\phi} \right|^2 \\
&\geq 0
\ee
for every $\ket{\psi} \in \C^d$. This can also be seen as a simple consequence of (the easy direction of) Bochner's theorem.
\end{proof}

\subsection{Proof of main result}

Before stating and proving our main result, let us fix some terminology. For an infinite matrix $\tau$ that is also a density operator on $\ell^2(\N)$, the associated \deff{maximally correlated state} $\Omega[\tau]$ on $\ell^2(\N)\otimes \ell^2(\N)$ is defined by
\bb
\Omega[\tau] \coloneqq \sum_{h,k=0}^\infty \tau_{hk} \ketbraa{h}{k}\otimes \ketbraa{h}{k}\, .
\label{maximally_correlated}
\ee
Maximally correlated states appear naturally in connecting coherence theory~\cite{Aberg2004, Aberg2006, Baumgraz2014, Winter2016, Chitambar-Hsieh-coherence, Regula2018, Fang2018, bound-coherence, GrandTour} (see also the review article~\cite{coherence-review}) with entanglement theory. When seen in this latter context, they are useful because they represent a particularly simple class of entangled states.
%For example, one can show that 

Before we proceed further, let us recall in passing that the R\'enyi entropy of a probability density defined on the interval $[-\pi,\pi]$ need not be finite; that is, it can be equal to $-\infty$. It is always bounded from above by $\log_2(2\pi)$, due to the non-negativity of relative entropy. Indeed, $h_\alpha(p) \leq \log_2 (2\pi)$ for every probability density $p$ defined on $[-\pi,\pi]$ and for all $\alpha\in (0,1)\cup(1,\infty)$, because $\log_2 (2\pi) - h_\alpha(p) = D_\alpha(p\Vert u) \geq 0$, where $u$ is the uniform probability density on $[-\pi,\pi]$. The same is true for the case $\alpha=1$, where we identify $h_1$ with the standard differential entropy~\eqref{differential_entropy}. However, as an example, if we take the probability density to be $p(x) = c |x|^{-\frac{1}{\alpha}}$ for $\alpha > 1$, where $c$ is a normalization factor, then the R\'enyi entropy $h_\alpha(p)$ diverges to $-\infty$. Note that the condition $h_{\alpha}(p) > -\infty$ is equivalent to the condition $\int_{-\pi}^{+\pi}d\phi\ p(\phi)^{\alpha} <\infty$. An analogous reasoning can be repeated for the case where $\alpha=1$. Here, $p(x) \propto |x|^{-1}\left(\log_2\frac{2\pi}{|x|}\right)^{-2}$ provides an example of a probability distribution for which $h(p) = -\infty$ (while $p$ itself is integrable). 

\begin{thm} \label{capacities_thm}
Let $p:[-\pi,+\pi]\to \R_+$ be a probability density function with the property that one of its R\'enyi entropies is finite for some $\alpha_0>1$, i.e.
\bb
\int_{-\pi}^{+\pi}d\phi\ p(\phi)^{\alpha_0} <\infty\, .
\label{finite_Renyi_p}
\ee
Then the two-way assisted quantum capacity, the unassisted quantum capacity, the private capacity, the secret-key capacity, and all of the corresponding strong converse capacities of the associated bosonic dephasing channel $\NN_p$ coincide, and are given by the expression
\bb
Q(\NN_p) &= Q^\dag(\NN_p) = P(\NN_p) = P^\dag(\NN_p) = \QQ(\NN_p) = \QQ^\dag(\NN_p) = P(\NN_p) = P_{\!\leftrightarrow}^\dag(\NN_p) \\
&= D(p\|u) = \log_2(2\pi) - h(p) \\
&= \log_2(2\pi) - \int_{-\pi}^{\pi} d\phi\ p(\phi) \log_2\frac{1}{p(\phi)}\, .
\ee
Here, $D(p\|u)$ denotes the Kullback--Leibler divergence between $p$ and the uniform probability density $u$ over $[-\pi,\pi]$, and $h(p)$ is the differential entropy of $p$.
\end{thm}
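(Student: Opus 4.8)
The plan is to prove the two matching bounds $Q(\NN_p)\geq D(p\|u)$ and $P_{\!\leftrightarrow}^\dag(\NN_p)\leq D(p\|u)$, since all eight capacities are sandwiched between these two (using the ordering relations collected in the Preliminaries, which reduce the theorem to exactly these two inequalities). The key object throughout is the infinite Toeplitz matrix $T_p$ with entries $(T_p)_{hk}=a_{h-k}$, where the generating function is $a(\phi)=2\pi p(\phi)\geq 0$; note $a_0=1$ so the diagonal of $T_p$ is all ones, and $T_p\geq 0$ by the computation in Lemma~\ref{Serra-Capizzano_lemma}'s proof (Bochner). Write $T_d\coloneqq T_p^{(d)}$ for the top-left $d\times d$ corner, which is positive semidefinite with unit diagonal, so $T_d/d$ is a density matrix.

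\medskip
\textbf{Lower bound.} First I would lower-bound $Q(\NN_p)$ by the single-letter coherent information $Q^{(1)}(\NN_p)\geq \Icoh(A\rangle B)_\nu$ evaluated on a specific input. Take $\rho=\tau_d$, the maximally mixed state on the first $d$ Fock levels, purified by the maximally entangled state $\Phi_d$. Since $\NN_p$ is the Hadamard channel $\rho\mapsto \rho\circ T_p$ (the Lemma on $\NN_p(\rho)=\rho\circ T_p$), we get $\NN_p(\tau_d)=\tau_d$, hence $S(B)=\log_2 d$; and $(\Id\otimes\NN_p)(\Phi_d)$ is unitarily equivalent (via the isometry $\ket n\mapsto\ket n\ket n$) to $T_d/d$, hence $S(AB)=S(T_d/d)$. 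Therefore $\Icoh=\log_2 d - S(T_d/d)=\frac1d\Tr[T_d\log_2 T_d]$. This is an achievable rate for every $d$, so $Q(\NN_p)\geq \lim_{d\to\infty}\frac1d\Tr[T_d\log_2 T_d]$. Now apply Lemma~\ref{Serra-Capizzano_lemma} with $F(x)=x\log_2 x$ (continuous on $\R_+$, and $F(x)=O(x^{\alpha_0})$ as $x\to\infty$ for any $\alpha_0>1$, so the hypothesis~\eqref{finite_Renyi_p} is exactly what is needed to invoke it): $\lim_{d\to\infty}\frac1d\Tr[T_d\log_2 T_d]=\int_{-\pi}^\pi\frac{d\phi}{2\pi}\, a(\phi)\log_2 a(\phi)=\int_{-\pi}^\pi d\phi\, p(\phi)\log_2(2\pi p(\phi))=D(p\|u)$.

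\medskip
\textbf{Upper bound.} For the upper bound I would run a teleportation-simulation argument on $\omega_{p,d}\coloneqq(\Id\otimes\NN_p)(\Phi_d)=\Omega[T_d/d]$, the maximally correlated state of $T_d/d$. The first step is to show that the channel $\NN_{p,d}(\sigma)\coloneqq\T^{(d)}(\sigma\otimes\omega_{p,d})$ obtained by teleporting through $\omega_{p,d}$ approximates $\NN_p$ on every fixed input: $\lim_{d\to\infty}\|(\Id\otimes\NN_p)(\rho)-(\Id\otimes\NN_{p,d})(\rho)\|_1=0$ — this is because the diagonal of $T_p$ is exactly $1$, so truncation only loses the off-diagonal tail, which vanishes in trace norm on a fixed input as $d\to\infty$ (a standard continuity/approximation estimate). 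Given this, any $n$-round secret-key-agreement protocol $\mathcal{P}_{n,\epsilon}$ over $\NN_p$, reduced to bipartite private-state distillation via~\cite{private,Horodecki2009}, can be simulated by an LOCC map acting on $n$ copies of $\omega_{p,d}$ up to simulation error $\delta_d\to 0$. Then Eq.~(5.37) of~\cite{WTB16} gives, for every $\alpha>1$ and every $d$,
\begin{equation}
R_{n,\epsilon}\leq \widetilde{E}_{R,\alpha}(\omega_{p,d})+\frac{2\alpha}{n(\alpha-1)}\log_2\!\left(\frac{1}{1-\delta_d-\epsilon}\right).
\end{equation}
Choosing the separable state $(\Id\otimes\NN_p)(\overline{\Phi_d})$ (the dephased maximally entangled state, whose correlation matrix is $\diag(T_d)/d=\id_d/d$) inside the infimum defining $\widetilde{E}_{R,\alpha}$, and using that $\omega_{p,d}$ and this state commute (both diagonal in the $\{\ket n\ket n\}$ basis on the support), reduces $\widetilde{E}_{R,\alpha}(\omega_{p,d})$ to a classical Petz--R\'enyi divergence, bounded by $\frac{1}{\alpha-1}\log_2\big(\frac1d\Tr[(T_d)^\alpha]\big)$. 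Taking $d\to\infty$ and applying Lemma~\ref{Serra-Capizzano_lemma} again, now with $F(x)=x^\alpha$ (for $1<\alpha\leq\alpha_0$, which is where~\eqref{finite_Renyi_p} is used a second time), gives $\limsup_d\frac{1}{\alpha-1}\log_2\frac1d\Tr[(T_d)^\alpha]=D_\alpha(p\|u)$. Since also $\delta_d\to 0$, we obtain $R_{n,\epsilon}\leq D_\alpha(p\|u)+\frac{2\alpha}{n(\alpha-1)}\log_2\frac{1}{1-\epsilon}$; sending $n\to\infty$ kills the second term, so $P_{\!\leftrightarrow}^\dag(\NN_p)\leq D_\alpha(p\|u)$ for all $\alpha>1$, and finally $\alpha\to 1^+$ with Lemma~\ref{technical_Renyi_lemma} (whose hypothesis $D_{\alpha_0}(p\|u)<\infty$ is again~\eqref{finite_Renyi_p}) yields $P_{\!\leftrightarrow}^\dag(\NN_p)\leq D(p\|u)$.

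\medskip
\textbf{Main obstacle.} The routine parts are the entropy bookkeeping and the invocations of the ordering of capacities. The two places requiring genuine care are: (i) the uniform-in-$d$ teleportation-simulation estimate~\eqref{eq:tp-sim-error}, which must be established on \emph{every} fixed input state including infinite-dimensional ones, controlling the trace-norm error from truncating $T_p$ to $T_d$ and from the fact that $\omega_{p,d}$ is not exactly a Choi state of a teleportation-simulable channel; and (ii) verifying the precise hypotheses of the Serra-Capizzano lemma for $F(x)=x\log_2 x$ and $F(x)=x^\alpha$ — in particular that $F$ need only be controlled by $x^{\alpha_0}$ at infinity, matching exactly the finiteness assumption~\eqref{finite_Renyi_p}, and that the $\limsup$ is in fact a limit so that the Szeg\H{o} identity applies on the nose. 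I expect (i) to be the main technical obstacle, since it is where the passage from a genuinely infinite-dimensional bosonic channel to finite-dimensional approximants must be made rigorous.
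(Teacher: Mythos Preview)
Your proposal is correct and follows the paper's proof essentially step for step: the same coherent-information ansatz $\Phi_d$ for the lower bound, the same teleportation simulation with resource $\omega_{p,d}$ and separable comparison state $\Omega[\Pi_d/d]$ for the upper bound, the same two invocations of the Serra-Capizzano lemma with $F(x)=x\log_2 x$ and $F(x)=x^\alpha$, and the same order of limits $d\to\infty$, $n\to\infty$, $\alpha\to 1^+$. On your flagged obstacle~(i), the paper's mechanism is slightly different from ``truncation loses the off-diagonal tail'': an explicit computation shows that $\NN_{p,d}$ acts as $\rho_{hk}\mapsto\big(\frac1d\sum_{x=0}^{d-1}(T_p)_{h\oplus x,\,k\oplus x}\big)\rho_{hk}$, and the Toeplitz structure forces all but $|h-k|$ terms of this average to equal $(T_p)_{hk}$, giving the entrywise bound $2|h-k|/d$; this entrywise convergence is then upgraded to strong convergence via the chain weak-operator $\Rightarrow$ trace-norm (Davies' lemma, since the limit is a state) $\Rightarrow$ strong (Shirokov's tensor-stability).
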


\begin{rem}
The condition on the R\'enyi entropy of $p$ is of a purely technical nature. We expect it to be obeyed in all cases of practical interest. For example, it holds true provided that $p$ is bounded on $[-\pi,\pi]$.
\end{rem}

\begin{rem}
By using H\"older's inequality, it can be easily verified that if~\eqref{finite_Renyi_p} holds for $\alpha_0\geq 1$ then it holds for all $\alpha$ such that $1\leq\alpha\leq \alpha_0$.
\end{rem}

%[A comment here: the above R\'enyi regularity condition seems to be needed also if we give up bounding the strong converse rates. This is because the function $x\log_2 x$ appearing in the von Neumann entropy has an anyway faster-than-linear asymptotic growth, and one needs the Serra-Capizzano result with $\alpha>1$ to cover it. So the same regularity condition~\eqref{finite_Renyi_p} which allows us to derive weak converse bounds gives us also for free strong converse ones. Better for us, as the proof can be unified.]

\begin{proof}[Proof of Theorem~\ref{capacities_thm}]
The smallest of all eight quantities is the unassisted quantum capacity $Q(\NN_p)$, and the largest is $P_{\!\leftrightarrow}^\dag(\NN_p)$. Therefore, it suffices to prove that
\bb
Q(\NN_p) \geq D(p\|u)\, ,\qquad P_{\!\leftrightarrow}^\dag(\NN_p) \leq D(p\|u)\, .
\label{eq:bounds-on-caps}
\ee
Note that it is elementary to verify that
\bb
D(p\|u) = \int_{-\pi}^{+\pi} d\phi\ p(\phi) \log_2\frac{p(\phi)}{1/(2\pi)} = \log_2(2\pi) - \int_{-\pi}^{+\pi} d\phi\ p(\phi) \log_2\frac{1}{p(\phi)} = \log_2(2\pi) - h(p)\, ,
\ee
where the differential entropy $h(p)$ is defined by~\eqref{differential_entropy}.

%In formula,
%\bb Q(\NN_p) \leq \min\left\{P(\NN_p),\, \QQ(\NN_p)\right\} \leq \max\left\{P(\NN_p),\, \QQ(\NN_p)\right\} \leq K(\NN_p)\, . \ee
To bound $Q(\NN_p)$ from below, we need an ansatz for a state $\ket{\Psi}_{AA'}$ to plug into~\eqref{LSD}. Letting $A$ and $A'$ be single-mode systems, we can consider the maximally entangled state $\ket{\Phi_d}_{AA'} \coloneqq \frac{1}{\sqrt{d}} \sum_{k=0}^{d-1} \ket{k}_A\ket{k}_{A'}$ locally supported on the subspace spanned by first $d$ Fock states $\ket{k}$ (see~\eqref{Fock}), where $k\in\{0,\ldots, d-1\}$. Let us also define the truncated matrix
\bb
T_p^{(d)} \coloneqq \Pi_d T_p \Pi_d = \sum_{h,k=0}^{d-1} (T_p)_{hk} \ketbraa{h}{k}\, ,
\label{truncated_Tp}
\ee
where
\bb
\Pi_d\coloneqq \sum_{k=0}^{d-1} \ketbra{k}\, .
\label{Pi_d}
\ee
Then note that
\bb
\omega_{p,d} \coloneqq&\ \big(I\otimes \NN_p\big)\big(\Phi_d\big) \\
=&\ \frac1d \sum_{h,k=0}^{d-1} \big(I\,\otimes\, \NN_p\big)\big(\ketbraa{hh}{kk}\big) \\
=&\ \frac1d \sum_{h,k=0}^{d-1} (T_p)_{hk} \ketbraa{hh}{kk} \\
=&\ \frac1d\, \Omega\big[ T_p^{(d)} \big]\, ,
\label{omega_p_d}
\ee
where $\Omega[\tau]$ is defined by~\eqref{maximally_correlated}. Consider that
\begin{align}
Q(\NN_p) &\textgeq{(i)} \limsup_{d\to\infty} \Icoh(A\rangle B)_{\big(I\,\otimes\, \NN_p^{A'\to B}\big)\big(\Phi_d^{AA'}\big)} \nonumber\\
&\texteq{(ii)} \limsup_{d\to\infty} \Icoh(A\rangle B)_{\omega_{p,d}} \nonumber\\
&\texteq{(iii)} \limsup_{d\to\infty} \left( \log_2 d - S\left( T_p^{(d)} \big/ d \right) \right) \nonumber\\
&\texteq{(iv)} \limsup_{d\to\infty} \left( \log_2 d + \frac1d \Tr T_p^{(d)}\! \left(-\log_2 d + \log_2 T_p^{(d)}\right) \right) \label{main_lower_bound_eq}
\\
&= \limsup_{d\to\infty} \frac1d \Tr T_p^{(d)} \log_2 T_p^{(d)} \nonumber\\
&\texteq{(v)} \int_{-\pi}^{\pi} \frac{d\phi}{2\pi}\ \left(2\pi p(\phi)\right) \log_2 \left(2\pi p(\phi)\right) \nonumber\\
&= \log_2(2\pi) - \int_{-\pi}^{\pi} d\phi\ p(\phi) \log_2 \frac{1}{p(\phi)}\, . \nonumber
\end{align}
Here: (i)~follows from the LSD theorem~\eqref{LSD}; in~(ii) we introduced the state $\omega_{p,d}$ defined by~\eqref{maximally_correlated}; (iii)~comes from~\eqref{Icoh}, due to the fact that $S(\Omega[\tau]) = S(\tau)$ on the one hand, and
\begin{align}
\Tr_A \omega_{p,d}^{AB} &= \Tr_A \big(I\otimes \NN_p^{A'\to B}\big)\big(\Phi_d^{AA'}\big) \nonumber\\
&= \frac1d \Tr_A\! \sum_{h,k=0}^{d-1} (T_p)_{hk} \ketbraa{h}{k}_A\otimes \ketbraa{h}{k}_B \nonumber\\
&= \frac1d \sum_{h,k=0}^{d-1} (T_p)_{hk} \delta_{hk} \ketbraa{h}{k}_B \\
&= \frac1d \sum_{k=0}^{d-1} \ketbra{k}_B \nonumber\\
&= \frac{\id_B}{d} \nonumber
\end{align}
and therefore $S\big( \Tr_A \big(I\,\otimes\, \NN_p^{A'\to B}\big)\big(\Phi_d^{AA'}\big)\big) = \log_2 d$ on the other; in~(iv) we simply substituted the definition~\eqref{von_Neumann} of von Neumann entropy; and finally in~(v) we employed Lemma~\ref{Serra-Capizzano_lemma} with the choice $a(\phi) = 2\pi p(\phi)$. This is possible due to our assumption that~\eqref{finite_Renyi_p} holds for some $\alpha>1$. Note that $F(x) = x\log_2 x$ satisfies $\left|F(x)\right| < x^\alpha$ for all $\alpha>1$ and for all sufficiently large $x\in \R_+$. This concludes the proof of the lower bound on $Q(\NN_p)$ in~\eqref{eq:bounds-on-caps}. We remark in passing that the Szeg\H{o} theorem has been applied before, although with an entirely different scope, in the context of quantum information theory~\cite{Lupo2010}.

%We start by defining a class of states that will be used shortly. For each $d\in \N_+$, consider the maximally correlated state
%\bb \omega_{p,d} \coloneqq \Omega\left[ \frac{1}{d}\, T_p^{(d)} \right] . \label{omega_p_d} \ee
We now establish the upper bound on $P_{\!\leftrightarrow}^\dag(\NN_p)$ in~\eqref{eq:bounds-on-caps}.
We claim that there is a sequence of LOCC protocols that can simulate $\NN_p$ using $\omega_{p,d}$ defined by~\eqref{omega_p_d} as a resource state and with error vanishing as $d\to\infty$. To see why this is the case, let $\rho$ be an arbitrary input state, and consider the $d$-dimensional teleportation protocol~\eqref{telep} on $\rho$ that uses $\omega_{p,d}$ as a resource. In formula, let us define
\bb
\NN_{p,d}^{A'\to B} (\rho_{A'})\coloneqq \T_{A'AB\to B}^{(d)} \left(\rho_{A'} \otimes \omega_{p,d}^{AB}\right) .
\label{simulation_NN_p_d}
\ee
We see that
\begin{align}
&\NN_{p,d}^{A'\to B} (\rho_{A'}) \nonumber\\
&\quad \texteq{(vi)} \sum_{x,z=0}^{d-1} X(x)_B Z(z)_B \Tr_{AA'}\! \left[ \rho_{A'} \otimes \omega_{p,d}^{AB}\, X(x)_{A'} Z(z)_{A'} \Phi_d^{AA'} Z(z)^\dag_{A'} X(x)^\dag_{A'} \right] Z(z)_B^\dag X(x)_B^\dag \nonumber\\
&\quad \texteq{(vii)} \sum_{x,z=0}^{d-1} \sum_{h,k=0}^{d-1} \frac1d\, (T_p)_{hk}\, X(x)_B Z(z)_B \nonumber\\
&\hspace{15.5ex} \Tr_{AA'}\! \left[ \rho_{A'} \otimes \ketbraa{hh}{kk}_{AB}\, X(x)_{A'} Z(z)_{A'} \Phi_d^{AA'} Z(z)^\dag_{A'} X(x)^\dag_{A'} \right] Z(z)_B^\dag X(x)_B^\dag \nonumber\\
&\quad \texteq{(viii)} \sum_{x,z=0}^{d-1} \sum_{h,k=0}^{d-1} \frac1d\, (T_p)_{hk}\, X(x)_B Z(z)_B \nonumber\\
&\hspace{15.5ex} \left(\Tr_{AA'}\! \left[ \rho_{A'} \otimes \ketbraa{h}{k}_{A}\, Z(z)_{A}^\intercal X(x)_{A}^\intercal \Phi_d^{AA'} X(x)_{A}^* Z(z)_{A}^* \right] \ketbraa{h}{k}_B \right) Z(z)_B^\dag X(x)_B^\dag \nonumber\\
&\quad \texteq{(ix)} \sum_{x,z=0}^{d-1} \sum_{h,k=0}^{d-1} \frac1d\, (T_p)_{hk}\, X(x)_B Z(z)_B \label{main_upper_bound_eq1} \\
&\hspace{15.5ex} \left( e^{\frac{2\pi i}{d} z(k-h)} \Tr_{AA'}\! \left[ \rho_{A'} \otimes \ketbraa{h\oplus x}{k\oplus x}_{A}\,  \Phi_d^{AA'} \right] \ketbraa{h}{k}_B \right) Z(z)_B^\dag X(x)_B^\dag \nonumber\\
&\quad \texteq{(x)} \sum_{x,z=0}^{d-1} \sum_{h,k=0}^{d-1} \frac{1}{d^2}\, (T_p)_{hk}\, X(x)_B Z(z)_B \left( e^{\frac{2\pi i}{d} z(k-h)} \rho_{h\oplus x,\, k\oplus x} \ketbraa{h}{k}_B \right) Z(z)_B^\dag X(x)_B^\dag \nonumber\\
&\quad \texteq{(xi)} \sum_{x,z=0}^{d-1} \sum_{h,k=0}^{d-1} \frac{1}{d^2}\, (T_p)_{hk}\, \rho_{h\oplus x,\, k\oplus x} \ketbraa{h\oplus x}{k\oplus x}_B \nonumber\\
&\quad = \sum_{x=0}^{d-1} \sum_{h,k=0}^{d-1} \frac1d\, (T_p)_{hk}\, \rho_{h\oplus x,\, k\oplus x} \ketbraa{h\oplus x}{k\oplus x}_B \nonumber\\
&\quad \texteq{(xii)} \sum_{h,k=0}^{d-1} \left(\frac1d  \sumno_{x=0}^{d-1} (T_p)_{h\oplus x,\, k\oplus x} \right) \rho_{hk} \ketbraa{h}{k}_B\, . \nonumber
\end{align}
In the above derivation, (vi)~follows from~\eqref{telep}, (vii)~from~\eqref{omega_p_d}, (viii)~from the formula
\bb
M\otimes \id \ket{\Phi_d} = \id \otimes M^\intercal \ket{\Phi_d}\, ,
\ee
valid for the maximally entangled state~\eqref{Phi_d} in any finite dimension, (ix)~and~(xi) from~\eqref{HW}, (x)~from the identity
\bb
\Tr \left[ M_A \otimes N_{A'} \Phi_d^{AA'} \right] = \frac1d \Tr \left[ MN^\intercal \right] ,
\ee
and finally (xii)~by a simple change of variable $h\oplus x \mapsto h$, $k\oplus x \mapsto k$, once one observes that
\bb
\sum_{x=0}^{d-1} (T_p)_{hk} \mapsto \sum_{x=0}^{d-1} (T_p)_{h\ominus x,\, k\ominus x} = \sum_{x'=0}^{d-1} (T_p)_{h\ominus (-x'),\, k\ominus (-x')} = \sum_{x'=0}^{d-1} (T_p)_{h\oplus x',\, k\oplus x'}\, ,
\ee
where $x'\coloneqq -x$.

The calculation in~\eqref{main_upper_bound_eq1} shows that
\bb
\braket{h|\NN_{p,d} (\rho) |k} = \left(\frac1d  \sumno_{x=0}^{d-1} (T_p)_{h\oplus x,\, k\oplus x}\right) \rho_{hk}\, .
\ee
We now want to argue that \emph{for fixed $h,k\in \N$} the above quantity converges to $(T_p)_{hk}\, \rho_{hk} = \braket{h|\NN_p(\rho)|k}$ as $d\to\infty$. To this end, note that if $d\geq h,k$ we have that $(T_p)_{h\oplus x,\, k\oplus x} = T_{hk}$ provided that either $x\leq \min\{d-1-h,d-1-k\} = \min\{d-h,d-k\}-1$ or $x\geq \max\{d-h,d-k\}$. Therefore, $(T_p)_{h\oplus x,\, k\oplus x} \neq T_{hk}$ for at most $|h-k|$ values of $x$, out of the $d$ possible ones. We can estimate the remainder terms pretty straightforwardly using the inequality $\left|(T_p)_{hk}\right|\leq 1$, valid for all $h,k\in \N$. Doing so yields 
\bb
\left|(T_p)_{hk} - \frac1d  \sum_{x=0}^{d-1} (T_p)_{h\oplus x,\, k\oplus x} \right| &\leq \left|(T_p)_{hk} - \frac{d-|h-k|}{d} (T_p)_{hk} \right| + \frac{|h-k|}{d} \\
&\leq \frac{2|h-k|}{d} \tends{}{d\to\infty} 0\, .
\label{main_upperbound_eq2}
\ee
Thus, for all fixed $h,k\in \N$,
\bb
\braket{h|\NN_{p,d} (\rho) |k} \tends{}{d\to\infty} \braket{h|\NN_{p} (\rho) |k}\qquad \forall\ \rho\, ,
\label{entrywise_convergence}
\ee
as claimed. We now argue that this implies the stronger fact that
\bb
\lim_{d\to\infty} \left\| \left(\left(\NN_{p,d} - \NN_p\right)_{A'\to B} \otimes I_E \right) (\rho_{A'E}) \right\|_1 = 0 \qquad \forall\ \rho_{A'E}\, ,
%\in \T(\HH_{A'E})\, ,
\label{strong_convergence}
\ee
where it is understood that $\rho_{A'E}$ is an arbitrary, but fixed state of a bipartite system $A'E$, with
the quantum system $E$ arbitrary. The above identity is usually expressed in words by saying that $\NN_{p,d}$ converges to $\NN_p$ in the \emph{topology of strong convergence}~\cite{Shirokov2008, Shirokov2018}. The arguments that allow to deduce~\eqref{strong_convergence} from~\eqref{entrywise_convergence} are standard:
\begin{enumerate}[(a)]
\item The linear span of the Fock states $\{\ket{k}\}_{k\in \N}$ is dense in $\HH_1$, and moreover the operators $\NN_{p,d} (\rho), \NN_{p} (\rho)$ are uniformly bounded in trace norm --- since they are states, they all have trace norm $1$. Hence, we see that~\eqref{entrywise_convergence} actually holds also when $\ket{h},\ket{k}$ are replaced by any two fixed vectors $\ket{\psi},\ket{\phi}\in \HH_1$. In formula,
\bb
\braket{\psi|\NN_{p,d} (\rho) |\phi} \tends{}{d\to\infty} \braket{\psi|\NN_{p} (\rho) |\phi}\qquad \forall\ \rho,\quad \forall\ \ket{\psi},\ket{\phi}\in \HH_1\, .
\label{weak_operator_convergence}
\ee
\item Therefore, by definition $\NN_{p,d} (\rho)$ converges to $\NN_{p} (\rho)$ in the \emph{weak operator topology}. Since the latter object is also a quantum state, an old result due to Davies~\cite[Lemma~4.3]{Davies1969}, which can also be seen as an elementary consequence of the `gentle measurement lemma'~\cite[Lemma~9]{VV1999} (see also~\cite[Lemmata~9.4.1 and~9.4.2]{MARK}), states that in fact there is trace norm convergence, i.e.
\bb
\lim_{d\to\infty} \left\| \left(\NN_{p,d} - \NN_p\right)(\rho) \right\|_1 = 0\qquad \forall\ \rho\, .
\label{strong_convergence_single_system}
\ee
\item The topology of strong convergence is stable under tensor products with the identity channel~\cite{Shirokov2008} (see also~\cite[Lemma~2]{Shirokov2018}). Therefore,~\eqref{strong_convergence_single_system} and~\eqref{strong_convergence} are in fact equivalent. Since we have proved the former, the latter also follows.
\end{enumerate}

We are now ready to prove that $P_{\!\leftrightarrow}^\dag(\NN_p)\leq D(p\|u)$. For a fixed positive integer $n\in \N_+$, consider a generic protocol as the one depicted in Figure~\ref{protocol_Q2_fig}, where the channel $\NN$ is now $\NN_p$. Since we are dealing with the secret-key-agreement capacity, the final state $\eta_n$ will approximate a private state $\gamma_n$ containing $\ceil{Rn}$ secret bits~\cite{private, Horodecki2009}. Here, $R$ is an achievable strong converse rate of secret-key agreement, i.e., it satisfies that $R<P_{\!\leftrightarrow}^\dag(\NN_p)$, where the right-hand side is defined by~\eqref{strong_converse_secret_key_agreement_capacity}. Call $\epsilon_n\coloneqq \frac12 \|\eta_n - \gamma_n\|_1$ the corresponding trace norm error, so that
\bb
\liminf_{n\to\infty} \epsilon_n <1\, .
\label{limsup_epsilon_n}
\ee
Imagine now to replace each instance of $\NN_p$ with its simulation $\NN_{p,d}$. This will yield at the output a state $\eta_{n,d}$, in general different from $\eta_n$; however, because of~\eqref{strong_convergence}, and since $n$ here is fixed, we have that the associated error $\delta_{n,d}$ vanishes as $d\to\infty$, i.e.
\bb
\delta_{n,d} \coloneqq \frac12 \left\|\eta_{n,d} - \eta_d \right\|_1 \tends{}{d\to\infty} 0\, .
\label{omega_n_d_approximation}
\ee

Now, after the above replacement the global protocol can be seen as an LOCC manipulation of $n$ copies of the state $\omega_{p,d}$ that is used to simulate $\NN_{p,d}$ as per~\eqref{simulation_NN_p_d}. By the triangle inequality, the trace distance between the final state $\eta_{n,d}$ and the private state $\gamma_n$ satisfies
\bb
\frac12 \left\| \eta_{n,d} - \gamma_n\right\|_1 \leq \frac12 \left\| \eta_{n,d} - \eta_n\right\|_1 + \frac12 \left\| \eta_{n} - \gamma_n\right\|_1 \leq \delta_{n,d} + \epsilon_n \, .
\ee
To apply the results of~\cite{MMMM}, we need to translate the above estimate into one that uses the fidelity instead of the trace distance. Such a translation can be made with the help of the Fuchs--van de Graaf inequalities~\cite{Fuchs1999}, here reported as~\eqref{Fuchs_vdG}. We obtain that $F(\eta_{n,d},\gamma_n)\geq \left(1- \delta_{n,d} - \epsilon_n\right)^2$. We can then use~\cite[Eq.~(5.37)]{MMMM} directly to deduce that
\bb
\left(1 - \delta_{n,d} - \epsilon_n\right)^2 \leq F(\eta_{n,d},\gamma_n) \leq 2^{- n\frac{\alpha-1}{\alpha} \left( R - \widetilde{E}_{R,\alpha}(\omega_{p,d}) \right)}
\label{converse_bound_eq1}
\ee
for all $1<\alpha\leq \alpha_0$, where
\bb
\widetilde{E}_{R,\alpha}(\rho_{AB}) \coloneqq \inf_{\sigma\in \SEP_{AB}} \widetilde{D}_\alpha(\rho\|\sigma)
\label{sandwiched_REE}
\ee
is the \deff{sandwiched $\boldsymbol{\alpha}$-R\'enyi relative entropy of entanglement}, and
\bb
\SEP_{AB} \coloneqq \mathrm{conv}\left\{ \ketbra{\psi}_A \otimes \ketbra{\phi}_B:\, \ket{\psi}_A\in \HH_A,\, \ket{\phi}_B\in \HH_B,\, \braket{\psi|\psi}=1=\braket{\phi|\phi} \right\}
\ee
is the set of \deff{separable states} over the bipartite quantum system $AB$. We can immediately recast~\eqref{converse_bound_eq1} as
\bb
R \leq \frac2n \frac{\alpha}{\alpha-1} \log_2\frac{1}{1-\delta_{n,d}-\epsilon_n} + \widetilde{E}_{R,\alpha}(\omega_{p,d})
\label{converse_bound_eq2}
\ee

Let us now estimate the quantity $\widetilde{E}_{R,\alpha}(\omega_{p,d})$. By taking as an ansatz for a separable state to be plugged into~\eqref{sandwiched_REE} simply $\Omega[\Pi_d/d]$ (see~\eqref{maximally_correlated} and~\eqref{Pi_d}), which is manifestly separable because $\Pi_d$ is diagonal, we conclude that
\bb
\widetilde{E}_{R,\alpha}(\omega_{p,d}) &\leq \widetilde{D}_\alpha\left( \omega_{p,d}\, \bigg\|\, \Omega\!\left[\frac{\Pi_d}{d}\right]\right) \\
&\texteq{(xiii)} \frac{1}{\alpha-1}\, \log_2 \Tr \omega_{p,d}^\alpha\, \Omega\!\left[\frac{\Pi_d}{d}\right]^{1-\alpha} \\
&\texteq{(xiv)} \frac{1}{\alpha-1}\, \log_2 \frac1d \Tr \left(T_{p}^{(d)}\right)^\alpha
\label{converse_bound_eq3}
\ee
Here, (xiii)~follows from~\eqref{commute_Petz--Renyi}, while in~(xiv) we simply recalled~\eqref{omega_p_d}.

Now, applying Lemma~\ref{Serra-Capizzano_lemma} once again with $a(\phi) = 2\pi p(\phi)$, we surmise that
\bb
\lim_{d\to\infty} \frac1d \Tr \left[\left(T_{p}^{(d)}\right)^\alpha\right] = \int_{-\pi}^{+\pi} \frac{d\phi}{2\pi}\ (2\pi p(\phi))^\alpha = (2\pi)^{\alpha-1} \int_{-\pi}^{+\pi} d\phi\ p(\phi)^\alpha\, .
\label{converse_bound_eq4}
\ee
Therefore, from~\eqref{converse_bound_eq3} we deduce that
\bb
\limsup_{d\to\infty} \widetilde{E}_{R,\alpha}(\omega_{p,d}) \leq \log_2(2\pi) + \frac{1}{\alpha - 1}\log_2 \int_{-\pi}^{+\pi} d\phi\ p(\phi)^\alpha = D_\alpha(p\|u)\, ,
\label{converse_bound_eq5}
\ee
where $D_\alpha$ is the $\alpha$-R\'enyi divergence defined by~\eqref{Renyi}. Due to both~\eqref{converse_bound_eq5} and~\eqref{omega_n_d_approximation}, taking the limit $d\to\infty$ in~\eqref{converse_bound_eq2} yields
\bb
R \leq \frac2n \frac{\alpha}{\alpha-1} \log_2\frac{1}{1-\epsilon_n} + D_\alpha(p\|u)\, .
\ee
We are now ready to take the limit $n\to\infty$. In light of~\eqref{limsup_epsilon_n}, we obtain that
\bb
R \leq \liminf_{n\to\infty} \left( \frac2n \frac{\alpha}{\alpha-1} \log_2\frac{1}{1-\epsilon_n} + D_\alpha(p\|u) \right) = D_\alpha(p\|u)\, .
\ee
The limit as $\alpha\to1^+$ can be computed via Lemma~\ref{technical_Renyi_lemma} (and in particular~\eqref{limit_1+_Renyi}), due to the condition~\eqref{finite_Renyi_p}, which can be rephrased as $D_{\alpha_0}(p\|u)<\infty$. It gives
\bb
R \leq \liminf_{\alpha\to 1^+} D_\alpha(p\|u) = D(p\|u)\, .
\ee
Since $R$ was an arbitrary achievable strong converse rate for secret-key agreement, we deduce that
\bb
P_{\!\leftrightarrow}^\dag(\NN_p) \leq D(p\|u)\, ,
\ee
completing the proof.
\end{proof}

\subsection{Extension to multimode channels}

We will now see how to extend our main result, Theorem~\ref{capacities_thm}, to the case of a multimode bosonic dephasing channel. An $m$-mode quantum system ($m\in \N_+$) is modelled mathematically by the Hilbert space $\HH_m = \HH_1^{\otimes m} = L^2(\R)^{\otimes m} = L^2(\R^m)$. The annihilation and creation operators $a_j,a_j^\dag$ ($j=1,\ldots,m$), defined by
\bb
a_1 \coloneqq a\otimes \id\otimes \ldots\otimes \id\, ,\quad\ldots ,\quad a_m \coloneqq \id\otimes\ldots \otimes \id\otimes a
\ee
in terms of the single-mode operators in~\eqref{a_adag}, satisfy the canonical commutation relations
\bb
\left[a_j,a_k\right] = 0 = \left[a_j^\dag, a_k^\dag\right] ,\qquad \left[a_j,a_k^\dag\right] = \delta_{jk} \id\, .
\ee
The multimode Fock states $\ket{\vb{k}}$, indexed by $\vb{k} = (k_1,\ldots,k_m)^\intercal\in \N^m$, are given by
\bb
\ket{\vb{k}} \coloneqq \ket{k_1} \otimes \cdots \otimes \ket{k_m}\, .
\ee

Now, for a probability density function $p$ on $[-\pi,\pi]^m$, the corresponding \deff{multimode bosonic dephasing channel} is the quantum channel $\NN_p^{(m)}:\T(\HH_m) \to \T(\HH_m)$ defined by
\bb
\NN_p^{(m)}(\rho) \coloneqq \int_{[-\pi,\pi]^m} d^m \vb{\upphi}\ p(\vb{\upphi})\, e^{-i \sum_j a^\dag_j a_j^{\phantom{\dag}} \phi_j} \rho\, e^{i\sum_j a^\dag_j a_j^{\phantom{\dag}} \phi_j} ,
\label{Np_multimode}
\ee
where $j=1,\ldots,m$, and $\vb{\upphi} = (\phi_1,\ldots,\phi_m)^\intercal$.

In perfect analogy with Theorem~\ref{capacities_thm}, we can now prove the following.

\begin{thm} \label{capacities_multimode_thm}
Let $p:[-\pi,+\pi]^m\to \R_+$ be a probability density function with the property that one of its R\'enyi entropies is finite for some $\alpha_0>1$, i.e.
\bb
\int_{[-\pi,\pi]^m} d^m \vb{\upphi}\ p(\vb{\upphi})^{\alpha_0} <\infty\, .
\label{finite_Renyi_p_multimode}
\ee
Then the two-way assisted quantum capacity, the unassisted quantum capacity, the private capacity, the secret-key capacity, and all of the corresponding strong converse rates of the associated multimode bosonic dephasing channel $\NN_p^{(m)}$ coincide, and are given by the expression
\bb
Q\big(\NN_p^{(m)}\big) &= Q^\dag\big(\NN_p^{(m)}\big) = P\big(\NN_p^{(m)}\big) = P^\dag\big(\NN_p^{(m)}\big) = \QQ\big(\NN_p^{(m)}\big) = \QQ^\dag\big(\NN_p^{(m)}\big) = P_{\!\leftrightarrow}\big(\NN_p^{(m)}\big) = P_{\!\leftrightarrow}^\dag\big(\NN_p^{(m)}\big) \\
&= D(p\|u) = m\log_2(2\pi) - h(p) \\
&= m\log_2(2\pi) - \int_{[-\pi,\pi]^m} d^m\vb{\upphi}\ p(\vb{\upphi}) \log_2\frac{1}{p(\vb{\upphi})}\, .
\ee
Here, $D(p\|u)$ denotes the Kullback--Leibler divergence between $p$ and the uniform probability distribution $u$ over $[-\pi,\pi]^m$, and $h(p)$ is the differential entropy of $p$.
\end{thm}

Rather unsurprisingly, one of the key technical tools that we need to prove the above generalization of Theorem~\ref{capacities_thm} is a multi-index version of the Szeg\H{o} theorem reported here as Lemma~\ref{Serra-Capizzano_lemma}. In fact, the original paper by Serra-Capizzano~\cite{SerraCapizzano2002} deals already with multi-indices, so we can borrow the following result directly from~\cite[Theorem~2]{SerraCapizzano2002} (cf.\ also the proof of Lemma~\ref{Serra-Capizzano_lemma}).

An \deff{multi-index infinite Toeplitz matrix} is an operator $T:\ell^2(\N^m) \to \ell^2(\N^m)$ with the property that its matrix entries $T_{\vb{h},\vb{k}}$ (where $\ket{h}=(h_1,\ldots,h_m)^\intercal\in \N^m$ is a multi-index) depend only on the difference $\vec{h}-\vec{k}$, in formula $T_{\vb{h},\vb{k}}= a_{\vb{h}-\vb{k}}$. The case of interest is when
\bb
a_{\vb{k}} = \int_{[-\pi,\pi]^m} d^m\vb{\upphi}\ a(\vb{\upphi})\, e^{-i\vb{k}\cdot\vb{\upphi}}\, ,
\ee
where $a:[-\pi,\pi]^m\to\R_+$ is a non-negative function, and $\vb{k}\cdot\vb{\upphi} \coloneqq \sum_{j=1}^m k_j\phi_j$. As in the setting of Szeg\H{o}'s theorem, one considers the truncations of $T$ defined for some $\vb{d} = (d_1,\ldots, d_m)^\intercal\in \N^m$ by
\bb
T^{(\vb{d})}\coloneqq \sum_{h_1,k_1=0}^{d_1-1}\ldots \sum_{h_m,k_m=0}^{d_m-1} T_{\vb{h},\vb{k}}\ketbraa{\vb{h}}{\vb{k}}\, .
\label{multiindex_truncation}
\ee
Note that $T^{(\vb{d})}$ is an operator on a space of dimension
\bb
D(\vb{d})\coloneqq \prod_{j=1}^m d_j\, .
\label{multiindex_total_dim}
\ee
The multi-index Szeg\H{o} theorem then reads
\bb
\lim_{\vb{d}\to\infty} \frac{1}{D\big(\vb{d}\big)} \Tr F\big( T^{(\vb{d})}\big) = \lim_{\vb{d}\to\infty} \frac{1}{D(\vb{d})} \sum_{j=1}^{D(\vb{d})} F\big(\lambda_j\big(T^{(\vb{d})}\big)\big) = \int_{[-\pi,\pi]^m} \frac{d\vb{\upphi}}{(2\pi)^m}\ F\big(a(\vb{\upphi})\big)\, ,
\label{Szego_multiindex}
\ee
where $F:\R\to\R$, and $\vb{d}\to\infty$ means that $\min_{j=1,\ldots,m} d_j \to\infty$. Conditions on $a$ and $F$ so that~\eqref{Szego_multiindex} holds are as follows.

\begin{lemma}[(Serra-Capizzano~\cite{SerraCapizzano2002}, multi-index case)] \label{Serra-Capizzano_multiindex_lemma}
If $a:[-\pi,\pi]^m\to \R_+$ is such that
\bb
\int_{[-\pi,\pi]^m} \frac{d^m\vb{\upphi}}{(2\pi)^m}\ a(\vb{\upphi})^\alpha < \infty
\ee
for some $\alpha \geq 1$, and moreover $F:\R_+ \to \R$ is continuous and satisfies
\bb
F(x) = O(x^\alpha) \qquad (x\to \infty)\, ,
\ee
then~\eqref{Szego_multiindex} holds.
\end{lemma}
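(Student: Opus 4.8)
The plan is to follow verbatim the proof of the single-index statement, Lemma~\ref{Serra-Capizzano_lemma}: first obtain the conclusion in the Avram--Parter (singular-value) form, and then upgrade it to the eigenvalue form~\eqref{Szego_multiindex} by exploiting the non-negativity of the symbol $a$. The starting point is that Theorem~2 of Serra-Capizzano~\cite{SerraCapizzano2002} is already proved for multilevel (multi-index) Toeplitz matrices, and under precisely the hypotheses assumed here --- $\int_{[-\pi,\pi]^m} \frac{d^m\vb{\upphi}}{(2\pi)^m}\, a(\vb{\upphi})^\alpha < \infty$ for some $\alpha \geq 1$, $F$ continuous on $\R_+$, and $F(x) = O(x^\alpha)$ as $x\to\infty$ --- it yields
\bb
\lim_{\vb{d}\to\infty} \frac{1}{D(\vb{d})} \sum_{j=1}^{D(\vb{d})} F\big(s_j\big(T^{(\vb{d})}\big)\big) = \int_{[-\pi,\pi]^m} \frac{d^m\vb{\upphi}}{(2\pi)^m}\ F\big(\big| a(\vb{\upphi})\big|\big)\, ,
\ee
where $s_j\big(T^{(\vb{d})}\big)$ is the $j^{\text{th}}$ singular value of the truncation~\eqref{multiindex_truncation}, $D(\vb{d})$ is the total dimension~\eqref{multiindex_total_dim}, and $\vb{d}\to\infty$ means $\min_j d_j\to\infty$. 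Since $a$ takes values in $\R_+$ we have $\big|a(\vb{\upphi})\big| = a(\vb{\upphi})$, so the right-hand side already matches that of~\eqref{Szego_multiindex}, and it only remains to replace singular values by eigenvalues.

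For this step I would show that each truncation $T^{(\vb{d})}$ is positive semi-definite, so that $s_j\big(T^{(\vb{d})}\big) = \lambda_j\big(T^{(\vb{d})}\big)$ for every $j$ and, by the spectral calculus, $\tfrac{1}{D(\vb{d})} \Tr F\big(T^{(\vb{d})}\big) = \tfrac{1}{D(\vb{d})} \sum_j F\big(\lambda_j\big(T^{(\vb{d})}\big)\big)$ coincides with the left-hand side of the display above. Positivity is obtained from a multi-index version of (the easy direction of) Bochner's theorem, exactly as in the proof of Lemma~\ref{Serra-Capizzano_lemma}: writing $T_{\vb{h},\vb{k}} = a_{\vb{h}-\vb{k}} = \int_{[-\pi,\pi]^m} \frac{d^m\vb{\upphi}}{(2\pi)^m}\, a(\vb{\upphi})\, e^{-i(\vb{h}-\vb{k})\cdot\vb{\upphi}}$ and using Fubini's theorem, one finds for every $\ket{\psi}\in \C^{D(\vb{d})}$ with components $\psi_{\vb{k}}$ (indexed by the multi-indices $\vb{k}$ ranging over $\prod_j \{0,\ldots,d_j-1\}$) that
\bb
\braket{\psi|T^{(\vb{d})}|\psi} = \sum_{\vb{h},\vb{k}} \psi_{\vb{h}}^*\, \psi_{\vb{k}}\, a_{\vb{h}-\vb{k}} = \int_{[-\pi,\pi]^m} \frac{d^m\vb{\upphi}}{(2\pi)^m}\ a(\vb{\upphi}) \left|\, \sumno_{\vb{k}} \psi_{\vb{k}}\, e^{i\vb{k}\cdot\vb{\upphi}} \right|^2 \geq 0\, ,
\ee
the only difference from the single-index case being that the sums and the integral now run over multi-indices. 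Combining the two displays gives~\eqref{Szego_multiindex}.

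I do not anticipate a genuine obstacle here: the analytic content is entirely carried by the cited multilevel Avram--Parter-type theorem of Serra-Capizzano, and the remaining argument is the same short positivity observation used in the single-index case. The only point requiring a little care is bookkeeping --- confirming that the normalization convention for the multi-index Fourier coefficients $a_{\vb{k}}$ appearing in~\eqref{Szego_multiindex} matches the one in~\cite{SerraCapizzano2002}, and that the continuity/growth hypotheses on $F$ together with the $\min_j d_j\to\infty$ limit regime in that reference are precisely the ones stated above (note also that $F$ is only ever evaluated on $\R_+$, where it is assumed defined, since the eigenvalues $\lambda_j\big(T^{(\vb{d})}\big)$ are non-negative).
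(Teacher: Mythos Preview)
Your proposal is correct and follows precisely the approach indicated in the paper: the paper does not give a separate proof of the multi-index lemma but simply states that~\cite[Theorem~2]{SerraCapizzano2002} already covers the multilevel case and refers back to the proof of Lemma~\ref{Serra-Capizzano_lemma}, which is exactly the Avram--Parter-to-Szeg\H{o} upgrade via the Bochner-type positivity computation you reproduced. Your multi-index positive semi-definiteness calculation is the straightforward analogue of the single-index one, so nothing is missing.
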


The proof of Theorem~\ref{capacities_multimode_thm} follows very closely that of Theorem~\ref{capacities_thm}. Let us briefly summarize the main differences.

\begin{proof}[Proof of Theorem~\ref{capacities_multimode_thm}]
As an ansatz in the coherent information~\eqref{main_lower_bound_eq}, we use a multimode maximally entangled state, defined by
\bb
\ket{\Phi_{\vb{d}}}\coloneqq \frac{1}{D(\vb{d})} \sum_{k_1=0}^{d_1-1}\ldots \sum_{k_m=0}^{d_m-1} \ket{\vb{k}}_A\ket{\vb{k}}_{A'}\, ,
\ee
where $\vb{d}\in \N^m$ is fixed for now. Since
\bb
\omega_{p,\vb{d}} \coloneqq \left(I\otimes \NN_p^{(m)}\right)(\Phi_{\vb{d}}) = \frac{1}{D(\vb{d})} \, \Omega\big[T_p^{\vb{d}}\big]
\ee
is still a maximally correlated state, the derivation in~\eqref{main_lower_bound_eq} is unaffected, provided that one employs in~(v) Lemma~\ref{Serra-Capizzano_multiindex_lemma}.

As for the converse bound on the strong converse rate, one replaces~\eqref{simulation_NN_p_d} with
\bb
\big(\NN_{p,\vb{d}}^{(m)}\big)_{A'\to B} (\rho_{A'})\coloneqq \left(\bigotimes\nolimits_{j=1}^m \T_{A_j'A_jB_j\to B_j}^{(d_j)} \right) \left(\rho_{A'} \otimes \omega_{p,d}^{AB}\right) ,
\ee
where $A_j$ denotes the $j^{\text{th}}$ mode of $A$, and analogously for $A'$ and $B$. Then~\eqref{main_upper_bound_eq1} becomes
\bb
\big(\NN_{p,\vb{d}}^{(m)}\big)_{A'\to B} (\rho_{A'}) = \sum_{h_1,k_1=0}^{d_1-1}\ldots \sum_{h_m,k_m=0}^{d_m-1} \left(\frac{1}{D(\vb{d})} \sum_{x_1=0}^{d_1-1}\ldots \sum_{x_m=0}^{d_m-1} (T_p)_{\vb{h}\oplus\vb{x},\, \vb{k}\oplus \vb{x}} \right) \rho_{\vb{h},\vb{k}} \ketbraa{\vb{h}}{\vb{k}}_B\, .
\ee
We can write an inequality analogous to~\eqref{main_upperbound_eq2} as
\bb
&\left| (T_p)_{\vb{h},\vb{k}} - \frac{1}{D(\vb{d})}\, \sum_{x_1=0}^{d_1-1}\ldots \sum_{x_m=0}^{d_m-1} (T_p)_{\vb{h}\oplus\vb{x},\, \vb{k}\oplus \vb{x}} \right| \\ &\qquad \leq \left|(T_p)_{\vb{h},\vb{k}} - \frac{\prod_{j=1}^m (d_j-|h_j-k_j|)}{D(\vb{d})}\, (T_p)_{\vb{h},\vb{k}} \right| + \frac{D(\vb{d})-\prod_{j=1}^m (d_j-|h_j-k_j|)}{D(\vb{d})} \\
&\qquad \leq 2\frac{D(\vb{d})-\prod_{j=1}^m (d_j-|h_j-k_j|)}{D(\vb{d})} \tends{}{\vb{d}\to\infty} 0\, .
\ee
In the exact same way, one uses the above inequality to prove a generalized version of~\eqref{strong_convergence} as
\bb
\lim_{\vb{d}\to\infty} \left\| \left(\left(\NN_{p,\vb{d}}^{(m)} - \NN_p^{(m)}\right)_{A'\to B} \otimes I_E \right) (\rho_{A'E}) \right\|_1 = 0 \qquad \forall\ \rho_{A'E}\, .
%\in \T(\HH_{A'E})\, ,
\label{strong_convergence_multimode}
\ee
The combination of~\eqref{converse_bound_eq3} and~\eqref{converse_bound_eq4} now becomes
\bb
\widetilde{E}_{R,\alpha}(\omega_{p,\vb{d}}) \leq \frac{1}{\alpha-1} \log_2 \frac{1}{D(\vb{d})} \Tr \left[\left(T_p^{\vb{d}}\right)^\alpha\right] \tends{}{\vb{d}\to\infty} \frac{1}{\alpha-1}\log_2 (2\pi)^{m(\alpha-1)} \int_{[-\pi,\pi]^m} d\vb{\upphi}\, p(\vb{\upphi})^\alpha\, ,
\ee
so that we find, precisely as in~\eqref{converse_bound_eq5}, that 
\bb
\limsup_{\vb{d}\to\infty} \widetilde{E}_{R,\alpha}(\omega_{p,\vb{d}}) \leq m\log_2(2\pi) + \frac{1}{\alpha-1}\log_2 \int_{[-\pi,\pi]^m} d\vb{\upphi}\, p(\vb{\upphi})^\alpha = D_\alpha(p\|u)\, .
\ee
The rest of the proof is formally identical.
\end{proof}

\section{Examples}

\subsection{Wrapped normal distribution}

The most commonly studied~\cite{Arqand2020, Arqand2021} example of the bosonic dephasing channel is that which yields in~\eqref{Np_action_Tp} a matrix $T_p$ with entries
\bb
(T_{p_\gamma})_{hk} = e^{-\frac{\gamma}{2}(h-k)^2} ,
\ee 
where $\gamma>0$ is a parameter. The probability density function $p:[-\pi,+\pi]\to \R_+$ that gives rise to this matrix is a \deff{wrapped normal distribution}, that is, a normal distribution on $\R$ with variance $\gamma$ `wrapped' around the unit circle. In formula, this is given by
\bb
p_{\gamma}(\phi) = \frac{1}{\sqrt{2\pi\gamma}}\sum_{k=-\infty}^{+\infty} e^{-\frac{1}{2\gamma} (\phi+2\pi k)^2} .
\label{wrapped_Gaussian}
\ee
\begin{figure}[ht]
\begin{tikzpicture}[scale=1]
\begin{axis}[axis lines = left, xlabel = $\phi$,
xmin=-pi, xmax=pi, ymin=0, ymax=0.7,
xtick={-3.14159, -1.5708, 0, 1.5708, 3.14159}, xticklabels={$-\pi$,$-\pi/2$,$0$,$\pi/2$,$\pi$}, yticklabel style={/pgf/number format/fixed,/pgf/number format/precision=5},
scaled y ticks=false, legend style={at={(axis cs:(2.7,0.1)}, anchor=south, minimum height=0.5cm}]
\addplot[line width=1pt, solid,color=black] table[x=a, y=b, col sep=comma]{distributions.csv};
\addlegendentry{$p_\gamma(\phi)$};
\addplot[line width=1pt, solid,color=Blues5seq5] table[x=a, y=c, col sep=comma]{distributions.csv};
\addlegendentry{$p_\lambda(\phi)$};
\addplot[line width=1pt, solid,color=Reds5seq4] table[x=a, y=d, col sep=comma]{distributions.csv};
\addlegendentry{$p_\kappa(\phi)$};
\end{axis}
\end{tikzpicture}
\caption{The probability density functions of the wrapped normal~\eqref{wrapped_Gaussian}, von Mises~\eqref{von_Mises}, and wrapped Cauchy distributions~\eqref{eq:wrapped-cauchy-cap}, plotted as a function of $\phi\in [-\pi,\pi]$ for the case where $\gamma=\lambda=\kappa=0.5$.}
\label{distributions_fig}
\end{figure}
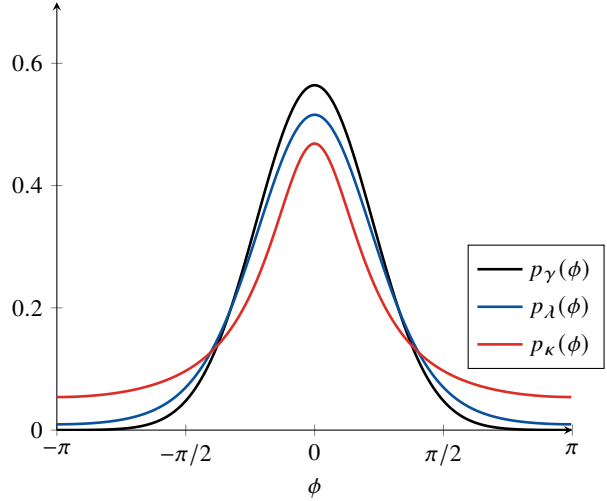
Its entropy can be expressed as~\cite[Chapter~3, \S~3.3]{PAPADIMITRIOU}
\bb
h(p_\gamma) = \frac{1}{\ln 2}\left( - \ln \left(\frac{\varphi(e^{-\gamma})}{2\pi}\right) + 2 \sum_{k=1}^\infty \frac{(-1)^k e^{-\frac{\gamma}{2}(k^2+k)}}{k\left(1-e^{-k\gamma}\right)} \right)\, ,
\ee
where
\bb
\varphi(q) \coloneqq \prod_{k=1}^\infty \left(1-q^k\right)
\ee
is the Euler function. Therefore, the capacities of the channel $\NN_{p_\gamma}$ are given by
\bb
Q(\NN_{p_\gamma}) &= Q^\dag(\NN_{p_\gamma}) = P(\NN_{p_\gamma}) = P^\dag(\NN_{p_\gamma}) = \QQ(\NN_{p_\gamma}) = \QQ^\dag(\NN_{p_\gamma}) = P_\leftrightarrow(\NN_{p_\gamma}) = P_\leftrightarrow^\dag(\NN_{p_\gamma}) \\
&= D(p_\gamma \| u) = \log_2 \varphi( e^{-\gamma} ) + \frac{2}{\ln 2} \sum_{k=1}^\infty \frac{(-1)^{k-1} e^{-\frac{\gamma}{2}(k^2+k)}}{k\left(1-e^{-k\gamma}\right)}\, .
\label{capacities_wrapped_Gaussian}
\ee

It is instructive to obtain asymptotic expansions of the above expressions in the limits $\gamma\ll 1$ (small dephasing) and $\gamma\gg 1$ (large dephasing).
\begin{itemize}
\item \emph{Small dephasing.} When $\gamma \to 0^+$, the channel $\NN_{p_\gamma}$ approaches the identity over an infinite-dimensional Hilbert space. Therefore, it is intuitive to expect that its capacities will diverge. To determine its asymptotic behavior, it suffices to note that in this limit the entropy of the wrapped normal distribution, which is very concentrated around $0$, is well approximated by that of the corresponding normal variable on the whole $\R$, i.e., $\frac12 \log_2 (2\pi e \gamma)$. Thus
\bb
Q(\NN_{p_\gamma}) &= Q^\dag(\NN_{p_\gamma}) = P(\NN_{p_\gamma}) = P^\dag(\NN_{p_\gamma}) = \QQ(\NN_{p_\gamma}) = \QQ^\dag(\NN_{p_\gamma}) = P_\leftrightarrow(\NN_{p_\gamma}) = P_\leftrightarrow^\dag(\NN_{p_\gamma}) \\
&\approx \frac12 \log_2\frac{2\pi}{e\gamma}\, .
\ee
In practice, already for $\gamma\lesssim 1$ the above estimate is within about $1\%$ of the actual value.

\item \emph{Large dephasing.} A straightforward computation using the series representation
\bb
-\ln \varphi(q) = \sum_{k=1}^\infty \frac{1}{k}\frac{q^k}{1-q^k}
\ee
yields the expansion
\bb
\ln \varphi(q) + 2 \sum_{k=1}^\infty \frac{(-1)^{k-1} q^{-\frac{1}{2}(k^2+k)}}{k\left(1-q^k\right)} &= q+\frac{q^2}{2}-\frac{q^3}{3}+\frac{q^4}{4}-\frac{q^5}{5}+\frac{2 q^6}{3}+O\Big(q^7\Big) \\
&= 2q - \ln(1+q) + O(q^6)\, .
\ee
This can be plugged into~\eqref{capacities_wrapped_Gaussian} to give
\bb
Q(\NN_{p_\gamma}) &= Q^\dag(\NN_{p_\gamma}) = P(\NN_{p_\gamma}) = P^\dag(\NN_{p_\gamma}) = \QQ(\NN_{p_\gamma}) = \QQ^\dag(\NN_{p_\gamma}) = P_\leftrightarrow(\NN_{p_\gamma}) = P_\leftrightarrow^\dag(\NN_{p_\gamma}) \\
&= \frac{2}{\ln 2}\, e^{-\gamma} - \log_2 \left(1+e^{-\gamma} \right) + O\left( e^{-6\gamma}\right) \\
&= \frac{e^{-\gamma}}{\ln 2} + O\left(e^{-2\gamma}\right) .
\ee
\end{itemize}

Incidentally, the combination of these two regimes yields an excellent approximation of the capacities across the whole range of $\gamma>0$. Namely,
\bb
Q(\NN_{p_\gamma}) &= Q^\dag(\NN_{p_\gamma}) = P(\NN_{p_\gamma}) = P^\dag(\NN_{p_\gamma}) = \QQ(\NN_{p_\gamma}) = \QQ^\dag(\NN_{p_\gamma}) = P_\leftrightarrow(\NN_{p_\gamma}) = P_\leftrightarrow^\dag(\NN_{p_\gamma}) \\
&\approx \max\left\{ \frac12 \log_2\frac{2\pi}{e\gamma},\ \frac{2}{\ln 2}\, e^{-\gamma} - \log_2 \left(1+e^{-\gamma} \right) \right\} .
\ee
The maximum absolute difference between the left-hand side and the right-hand side for $\gamma>0$ is less than $4\times 10^{-3}$.

\subsection{Von Mises distribution}

The \deff{von Mises distribution} on $[-\pi,+\pi]$ is defined by
\bb
p_\lambda(\phi) = \frac{e^{\frac1\lambda \cos(\phi)}}{2\pi\, I_0(1/\lambda)}\, ,
\label{von_Mises}
\ee
where $I_n$ denotes a modified Bessel function of the first kind. Here, $\lambda>0$ is a parameter that plays a role analogous to that $\gamma>0$ played in the case of the wrapped normal. The matrix $T_{p_\lambda}$ obtained in~\eqref{Np_action_Tp} for $p=p_\lambda$ is given by
\bb
(T_{p_\lambda})_{hk} = \frac{I_{|h-k|}(1/\lambda)}{I_0(1/\lambda)}\, .
\ee
The differential entropy of $p_\lambda$ can be calculated analytically, yielding~\cite[Chapter~3, Section~3.3]{PAPADIMITRIOU}
\bb
h(p_\lambda) = \log_2 (2\pi\, I_0(1/\lambda)) - \frac{1}{\ln 2} \frac{I_1(1/\lambda)}{\lambda\, I_0(1/\lambda)}\, .
\ee
Therefore, the capacities of the corresponding bosonic dephasing channel are given by
\bb
Q(\NN_{p_\lambda}) &= Q^\dag(\NN_{p_\lambda}) = P(\NN_{p_\lambda}) = P^\dag(\NN_{p_\lambda}) = \QQ(\NN_{p_\lambda}) = \QQ^\dag(\NN_{p_\lambda}) = P_\leftrightarrow(\NN_{p_\lambda}) = P_\leftrightarrow^\dag(\NN_{p_\lambda}) \\
&= \frac{1}{\ln 2} \frac{I_1(1/\lambda)}{\lambda\, I_0(1/\lambda)} - \log_2 I_0(1/\lambda)\, .
\ee

\subsection{Wrapped Cauchy distribution}

Our final example of a probability distribution on the circle, and of the bosonic dephasing channel associated to it, is defined similarly to the wrapped normal distribution, but this time starting from the Cauchy probability density function. Namely, for some parameter $\kappa>0$ we set
\bb
p_\kappa(\phi) \coloneqq \sum_{k=-\infty}^{+\infty} \frac{\sqrt{\kappa}}{\pi\left(\kappa+\left(\phi+2\pi k\right)^2\right)} = \frac{1}{2\pi} \frac{\sinh(\sqrt{\kappa})}{\cosh(\sqrt{\kappa}) -\cos\phi}.
\label{wrapped_Cauchy}
\ee
For a proof of the second identity, see~\cite[p.~51]{MARDIA}. The matrix $T_{p_\kappa}$ obtained in~\eqref{Np_action_Tp} for $p=p_\kappa$ is given by
\bb
(T_{p_\kappa})_{hk} = e^{-\sqrt{\kappa}|h-k|}\, .
\ee
The differential entropy of $p_\kappa$ is equal to $\log_2\big(2\pi\big(1-e^{-2\sqrt{\kappa}}\big)\big)$~\cite[Chapter~3, \S~3.3]{PAPADIMITRIOU}, implying that the various capacities of the corresponding bosonic dephasing channel $\NN_{p_\kappa}$ are equal to
\bb
\CC(\NN_{p_\kappa}) = \log_2\!\left(\frac{1}{1-e^{-2\sqrt{\kappa}}}\right) \, .
\label{eq:wrapped-cauchy-cap}
\ee

%\bibliography{biblio,Ref}

\end{document}